\crefname{algocf}{Algorithm}{Algorithms}
\Crefname{algocf}{Algorithm}{Algorithms}
\newtheorem{theorem}{Theorem}
\newtheorem{corollary}[theorem]{Corollary}
\newtheorem{definition}[theorem]{Definition}
\newtheorem{lemma}[theorem]{Lemma}
\newcommand*{\defeq}{\mathrel{\vcenter{\baselineskip0.5ex \lineskiplimit0pt
                     \hbox{\scriptsize.}\hbox{\scriptsize.}}}%
                     =}
\def\Ddots{\mathinner{\mkern1mu\raise\p@
\vbox{\kern7\p@\hbox{.}}\mkern2mu
\raise4\p@\hbox{.}\mkern2mu\raise7\p@\hbox{.}\mkern1mu}}
\newcommand{\Mopt}{M}
\newcommand{\Malg}{D}
\newcommand{\outtree}{\mathcal{T}}
\newcommand{\allouttrees}{\mathcal{T}}
\newcommand{\local}{LOCAL\xspace}
\newcommand{\congest}{CONGEST\xspace}
\newcommand{\mingirth}{\ensuremath{4r+3}}
\newcommand{\loweralg}{\ensuremath{\mathcal{ALG}}\xspace}
\newcommand{\N}{\mathbb{N}}
\newcommand{\sizeof}[1]{\left|#1\right|}
\DeclareMathOperator*{\argmax}{arg\,max}
\DeclareMathOperator*{\argmin}{arg\,min}
\newcommand{\varprio}[1]{\mathord{\mathit{prio}}^{#1}}
\newcommand{\varid}[1]{\mathord{\mathit{sel}}^{#1}}
\newcommand{\distr}{distance-$r$ dominating set\xspace}
\newcommand{\DISTR}{Distance-$r$ Dominating Set\xspace}
\newenvironment{apthm}[1]{\par\addvspace{3mm}\noindent\textbf {\Cref{#1}\;}}{\par\addvspace{3mm}}
\title{Distributed \DISTR \\ on Sparse High-Girth Graphs}
\author{
  Saeed Akhoondian Amiri\thanks{Max Planck Institute for Informatics, Saarland Informatics Campus, Germany; \texttt{\{samiri,bwiederh\}@mpi-inf.mpg.de}}
  \and
  Ben Wiederhake$^{\ast}$\thanks{Saarbrücken Graduate School of Computer Science, Germany}
}
\begin{document}

\maketitle

\begin{abstract}
\noindent The dominating set problem and its generalization, the
\distr problem, are among the well-studied problems in the sequential
settings. In distributed models of computation, unlike for domination,
not much is known about distance-$r$
domination. This is actually the case for other important
closely-related covering problem, namely, the distance-$r$ independent set problem.

By result of Kuhn et
al.~\cite{Kuhn:2016:LCL:2906142.2742012} we know the distributed
domination problem is hard on
high girth graphs; we study the
problem on a slightly restricted subclass of these graphs:
graphs of bounded expansion with high girth, i.e.\ their girth should be
at least $\mingirth$.

We show that in such graphs, for every constant $r$, a simple greedy \congest
algorithm provides a constant-factor approximation of the minimum
distance-$r$ dominating set problem, in a
constant number of rounds. More precisely, our constants are dependent
to $r$, not to the size of the graph. This is the first algorithm that
shows there are non-trivial constant
factor approximations in constant number of rounds for any distance
$r$-covering problem in distributed settings. 

To show the dependency
on $r$ is inevitable, we provide an unconditional lower bound showing the same
problem is hard already on rings. We also show that our analysis of
the algorithm is relatively tight, that is any significant improvement to the
approximation factor requires new algorithmic ideas.
\end{abstract}

\clearpage

\section{Introduction}\label{sec:intro}

The dominating set problem asks for a set of vertices $D$ of a
graph $G=(V,E)$ such that every other vertex of $G$ is a neighbor of a
vertex in $D$. Given that $D=V$ is a trivial solution to the problem,
we are interested in finding a set $D$ of small size.

The problem plays an important role both
from the theoretical and practical perspective of computer science. For the latter,
it serves as an initial set of vertices to form a network backbone
and e.g.~facilitates constructing small routing tables in networks; for
the former, the dominating set problem is a central problem in
showing lower bounds for several complexity paradigms: It is one of
Karp's 21 NP-complete problems, it is the
central $W[2]$-complete.

We consider the problem in distributed settings, namely on \local and
\congest models. Intuitively speaking, in these models, every
vertex in the graph is a processor, has a unique identifier,
and communicate only with its neighbors per round.
The \congest model restricts the bandwidth of communication links to a
reasonable complexity. The aim is to solve the problem with the least
number of communication rounds.
A more rigorous definition follows in~\Cref{sec:model}.
We specifically look into the problem of finding a small \distr,
where each vertex needs to output its membership.

In the following, we first go over the status of domination problems in distributed computing and in particular in sparse graphs. Then we
explore existing tools and we explain what is their shortage for our
purpose. Afterward, we introduce our results and
determine where in the existing literature it belongs to.

\subsection*{Related Work}

In distributed settings, for dominating set problem in general graphs recently
Kuhn et al.~\cite{DBLP:conf/podc/DeurerKM19} provide a $ (1 +\epsilon)(1 +\log (\Delta+
1))$-approximation of the problem in $f(n)$ rounds, where $\Delta$ is
the maximum degree and
$f\colon\N\rightarrow\N$ is the number of rounds that is needed to
compute a special graph decomposition, called the network
decomposition~\cite{DBLP:conf/focs/AwerbuchGLP89,Awerbuch:1992:FND:135419.135456,DBLP:conf/soda/Ghaffari19}.

On the other hand, the lower bound of Kuhn et
al.~\cite{Kuhn:2016:LCL:2906142.2742012} shows that finding a
logarithmic approximation in sublogarithmic time for
minimum dominating set (and some other covering problems) is impossible
in general graphs, even in the \local model of computation. Their
lower-bound graph has a high girth (as a function of $n$), but also it was
of unbounded arboricity (more generally unbounded average degree).

In fact, they provide a negative example of the famous question of Naor and Stockmeyer: \enquote{What can be
computed locally?}~\cite{what-local}. 

If we consider a graph class of very high girth and very low edge
density, e.g.\ trees (graphs of infinite girth), the problem is easy to approximate
in zero rounds: take all non-leaf vertices.

The above observations raise the following question: 
In which graph classes does the problem admit a constant
approximation factor in a constant number of rounds? 

Naturally, given the lower bound, we have to search affirmative answer
in sparse graphs.
Along this line, there are several interesting results in
sparse graphs. Lenzen et
al.~\cite{10.1007/978-3-540-87779-0_6,ds-planar} provided the first
constant-factor approximation in a constant number of rounds in planar
graphs, then this has been improved by Czygrinow et al.~\cite{10.1007/978-3-540-87779-0_6}. Later Amiri
et al.~\cite{Amiri2016,amirilog} provided a new analysis method to
extend the result of Lenzen et al.\ to bounded genus graphs. This
has recently been improved to excluded minor graphs by
Czygrinow et al.~\cite{DBLP:conf/isaac/CzygrinowHWW18}.

A natural generalization of excluded minor graphs is the class of
bounded expansion graphs, in simple words, bounded expansion graphs
also exclude minors but only locally; we may have large clique minor
in the entire graph. 

On graphs of bounded expansion, there is only a logarithmic time constant
factor approximation known for dominating set, however, it seems that
one can extend the algorithm of ~\cite{DBLP:conf/isaac/CzygrinowHWW18}
to bounded expansion graphs as they care only about local minors. 
If we go slightly beyond those graphs, to graphs of
bounded arboricity (where every subgraph has a constant edge density), the situation is worse: only an
$O(\log \Delta)$-approximation in $O(\log n)$ rounds is known. There is a
$O(\log n)$ round $O(1)$-approximation in such graphs, however, this algorithm is
randomized~\cite{ds-arbor}.

What said is all about dominating set, the situation gets drastically
worse if we go to \distr. That is the only known algorithm that solves
the problem on a non-trivial class of graphs (e.g.~paths, cliques are
among trivial classes of graphs) is the algorithm of Amiri et
al.~\cite{amiri2017distributed} for bounded expansion graphs which
provides a constant factor approximation in a logarithmic number of
rounds.

\subsection*{The Challenge of Approximating \DISTR}

There are several existing approaches one might try to employ to
tackle the problem: $1)$ take the $r$-th power of the graph and go back to dominating set, $2)$
decentralized existing decomposition methods in the sequential setting
and employ them, $3)$ use existing fast distributed graph
decomposition methods for sparse graphs. In the following, we explain how all of the
above approaches are not practical in providing sublogarithmic
round algorithms for distance-$r$ covering problems, and
in particular for \distr.

For the first approach, clearly we lose the sparsity of the graph
already on stars. Hence, we cannot rely on existing algorithms for domination
problem in sparse graphs. 

If we decentralize the existing sequential
decomposition methods, we can barely hope to get anything better
than logarithmic rounds: every such decomposition we know 
already takes
polylogarithmically many rounds. Even assuming the decomposition is
already given, in such methods we have to sequentially go over the
clusters, however the number of clusters is usually at least
logarithmic. Hence, we cannot
hope for a sublogarithmic rounds.

For the third approach, we first briefly explain how the existing
methods work and then explain why it is not possible to stick to known
techniques. These methods are mostly inspired by existing methods in
classical settings, like Baker's
method~\cite{baker1994approximation} e.g.\ the $O(\log^\ast n)$ round
algorithm of~\cite{czygrinow2006distributed} is among them. The idea
is to find a partition of
a sparse graph into connected clusters such that each cluster has a
small diameter and the number of in-between cluster edges is small (in the Baker's
method, we have treewidth of each cluster is bounded, whereas for
distributed settings we require the diameter to be small). Then find
the optimal solution inside each cluster efficiently, and given the
fact that the number of direct edges between a pair of
clusters is small, we ignore/resolve conflicts. 

However, the latter already for distance-$2$
domination fails, as the number of in-between cluster edges in the
distance graph is high. Also, we do not have a pleasure to rely on global properties
similar to what we exploit in the sequential setting to make our
choice wiser. Since such approaches increase the number of
rounds. Therefore, any distributed algorithm that solves distance covering
problems either have to develop completely new techniques or resolve the above
issue concretely tailored for the underlying graph class/problem.

\subsection*{Our Results}
As said, we consider a generalization of the dominating set problem,
i.e.\ the~\distr problem: find a set of vertices
$D$ such that every other vertex is within distance $r$ of one of the
vertices in $D$. We fill a gap between the lower bound and upper
bounds by analyzing the complexity of the problem on
graphs of high girth (similar to the lower bound
graph by Kuhn et al.), but given that lower bound graph was relatively
dense, we restrict
ourselves to sparse graphs, in particular to bounded expansion graphs
(similar to the work of~\cite{amiri2017distributed}). 

In the aforementioned class of graphs, we
provide an affirmative answer to the question of Naor and Stockmeyer, by
designing a simple deterministic \congest constant factor
approximation algorithm (more precisely, our constant depends only on
$r$) in a constant number of rounds (again, this constant depends only
on $r$ not the size of the graph). Formally, we prove the
following~\Cref{thm:algorithm} in~\Cref{thm:mdsapprox}.

\begin{theorem}\label{thm:algorithm}
    Let $\mathcal{C}$ be a graph class of bounded expansion $f(r)$ and girth at
    least $\mingirth$. There is a \congest algorithm that runs in $O(r)$ rounds
    and provides an $O(r\cdot f(r))$-approximation of minimum
    \distr on $\mathcal{C}$.
\end{theorem}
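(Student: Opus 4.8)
The plan is to analyze the natural greedy algorithm: in parallel, every vertex that "wants" to be dominated picks a representative in its distance-$r$ ball according to some fixed priority rule (say, lowest identifier, or highest degree), and the set of chosen representatives becomes the distance-$r$ dominating set $D$. First I would make this precise as an $O(r)$-round \congest procedure: each vertex learns enough about its radius-$r$ neighborhood (which is possible in $O(r)$ rounds, and in \congest because the high-girth assumption forces these balls to be trees, so their size — and hence the amount of information to forward — is controlled), then applies the selection rule locally and announces membership. Correctness (that $D$ is a valid \distr) is immediate from the construction. The entire content is the approximation bound.

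To bound $|D|$ against $|\Mopt|$, the standard strategy is a charging argument: assign each $v \in D$ to an optimal dominator in $\Mopt$ that is "close" to $v$, and show no optimal vertex is overcharged by more than $O(r \cdot f(r))$. The key structural fact I would exploit is that within distance $\mingirth = 4r+3$, the graph looks like a tree (girth exceeds $4r+2$, so any two shortest paths of length $\le 2r+1$ from a common vertex that reconverge would create a short cycle). This rigidity means: if $u \in \Mopt$ dominates a set $S_u$ of vertices (all within distance $r$ of $u$), and each such vertex $v \in S_u$ selected some representative $d(v) \in D$ with $d(v)$ within distance $r$ of $v$, then every such $d(v)$ lies within distance $2r$ of $u$, and moreover the ball of radius $2r$ around $u$ is a tree. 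So the number of \emph{distinct} representatives chosen by vertices in $S_u$ — which is what gets charged to $u$ — is at most the number of distinct vertices in $D$ inside $B_{2r}(u)$. The remaining step is a counting/packing argument inside this tree-like ball, using the priority rule to argue that the selected vertices of $D$ in $B_{2r}(u)$ cannot be too numerous: because each of them was selected by some vertex, and by bounded expansion the number of vertices in $D$ that can "co-occur" in overlapping radius-$r$ balls is limited. Concretely, I would bound the number of edges (or a shallow minor) in the subgraph induced on $B_{2r}(u) \cap D$ together with their selectors, and invoke the bounded-expansion bound $f(r)$ on the relevant shallow minor to get $|B_{2r}(u) \cap D| = O(r \cdot f(r))$; summing over $u \in \Mopt$ yields $|D| = O(r \cdot f(r)) \cdot |\Mopt|$.

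The main obstacle I anticipate is the last counting step — controlling $|B_{2r}(u) \cap D|$. In a plain tree this is easy, but bounded expansion only excludes minors \emph{at a bounded depth}, so I must be careful that the minor I build to apply $f(r)$ has depth $O(r)$ (not depending on $n$): contracting each selected vertex together with a witness path to its selector gives branch sets of radius $O(r)$, which is exactly the regime where $f(r)$ applies. The high-girth hypothesis is what makes these witness structures disjoint or nearly so, preventing the branch sets from collapsing and the minor from becoming degenerate. I would also need to handle the interaction between different optimal dominators whose balls overlap — but since each $v \in D$ is charged only to \emph{one} optimal dominator (e.g.\ the nearest one, ties broken by identifier), overlaps only help in the sense that they do not cause double-counting on the $D$ side; the worst case is simply when all of $B_{2r}(u)$'s selected vertices are charged to a single $u$, which is precisely the bound above. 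Finally, I would verify the round complexity: gathering radius-$r$ (really radius-$O(r)$) tree neighborhoods, running the selection, and disseminating the outcome all fit in $O(r)$ rounds, and the \congest bandwidth suffices because high girth caps the neighborhood sizes by a function of $r$ and the (bounded) degree-like parameters implicit in bounded expansion.
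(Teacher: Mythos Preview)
Your proposal has the right overall shape (greedy selection, charge to the nearest optimal dominator, invoke bounded expansion via a shallow minor), but there are two genuine gaps where the argument as written would fail.

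First, the priority rule is not interchangeable. You say ``say, lowest identifier, or highest degree'' as though either works, but with lowest identifier the bound is simply false: on a long path with adversarially placed identifiers, every vertex can be the minimum in \emph{some} radius-$r$ ball, so $|D|$ can be $\Theta(n)$ while $|\Mopt| = \Theta(n/r)$, and there is nothing that ties this to $f(r)$. The paper's analysis hinges on the highest-\emph{distance-$r$-degree} rule: the key structural lemma is that if a vertex $v$ is selected by some $u$ in the same Voronoi cell $H_m$, then $v$ must lie on a shortest path from $m$ to the boundary of $H_m$ --- because otherwise the parent $w$ of $v$ (toward $m$) would satisfy $N^r[v] \subsetneq N^r[w]$ and $u$ would have preferred $w$. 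This is what confines the ``within-cell'' part of $D$ to a small set (the union of at most $2f(r)|M|$ paths of length $\le r$), and it breaks completely for an arbitrary priority.

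Second, your counting step does not go through. You propose to bound $|B_{2r}(u) \cap D|$ by building a shallow minor whose branch sets are ``selected vertex plus witness path to its selector''. But bounded expansion only bounds the \emph{edge density} of shallow minors, not their vertex count; a minor on $|B_{2r}(u) \cap D|$ vertices with density $\le f(r)$ tells you nothing about $|B_{2r}(u) \cap D|$ itself. The paper instead applies the expansion bound \emph{globally}: it contracts each Voronoi cell (radius $\le r$) to a single vertex, obtaining a graph on $|\Mopt|$ vertices with at most $f(r)\,|\Mopt|$ edges, i.e.\ at most $f(r)\,|\Mopt|$ inter-cell edges in $G$. Every selection that crosses a cell boundary contributes at most $2r$ candidates (the message traverses an inter-cell edge at most once and carries at most $r$ distinct identifiers in each direction), and every within-cell selection is handled by the degree-monotonicity lemma above. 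Relatedly, your \congest justification is incorrect: high girth plus bounded expansion does \emph{not} bound $|N^r(v)|$ (a star has infinite girth, constant expansion, and a center with $n-1$ neighbors). The paper does not gather the ball; it computes only $|N^r(v)|$ by passing partial subtree counts along the (locally tree-shaped) edges, which fits in $O(\log n)$ bits per message.
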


As explained earlier, in contrast to our algorithm, the existing
methods for domination problems are not extendable to distance-$r$
domination problem.


Given that distance-$r$ dominating set is equivalent to the dominating
set of the $r$-th power of graph, this is one of the few algorithms
that can actually provide a constant factor approximation in
a non-trivial class of dense graphs for covering problems. 
There are very few known algorithms
with a constant factor guarantee in a constant number of rounds on
non-trivial dense graphs, e.g.~the algorithm of Schneider et
al.~\cite{DBLP:conf/podc/SchneiderW08} on graphs of bounded
independence number (for the independent set and the connected
dominating set problem), partially falls in this category.

To show that our upper bound is reasonably tight, we provide a lower bound as well.
This we obtain by a reduction from lower bound for independent set on the ring
~\cite{10.1007/978-3-540-87779-0_6,ds-planar} to the \distr on rings
(of course, the girth of the ring is high). 
More formally we prove the
following~\Cref{thm:lowerbound}.
\begin{theorem}\label{thm:lowerbound}
    Assume an arbitrary but fixed $\delta > 0$ and $r > 1$, with $r \in o(\log^{\ast} n)$.
    Then there is no deterministic \local algorithm that finds in $O(r)$ rounds
    a $(2r + 1 - \delta)$-approximation of \distr for all $G \in \mathcal{C}$,
    where $\mathcal{C}$ is the class of cycles of length $\gg \mingirth$.
\end{theorem}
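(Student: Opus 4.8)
The plan is to reduce from the known lower bound for approximating maximum independent set (equivalently, for computing a large independent set) on the ring. Recall the classical fact (from \cite{10.1007/978-3-540-87779-0_6,ds-planar}, originally Linial-style arguments) that on a cycle $C_n$, a deterministic \local algorithm running in $o(\log^\ast n)$ rounds cannot produce an independent set whose size is within a $(1-\delta')$-factor of the optimum $\lfloor n/2\rfloor$, for any fixed $\delta' > 0$; the heart of this is a Ramsey/indistinguishability argument showing the output on a long monochromatic stretch of identifiers must be periodic, and periodicity forces the solution to be bounded away from optimal. I would take this as the black box.

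\textbf{Step 1: The gadget relating \distr on a long cycle to independent set on a coarser cycle.} On the cycle $C_N$ with $N \gg \mingirth$, observe that a \distr is exactly a set $D$ of vertices such that consecutive chosen vertices are at cyclic distance at most $2r+1$ (so that the $r$-balls cover everything), and a minimum such set has size $\lceil N/(2r+1)\rceil$. So the task of approximating minimum \distr on $C_N$ within factor $2r+1-\delta$ is the task of choosing at least $\approx N/(2r+1)$ ``well-spread'' vertices but not many more than that — intuitively, one must pick roughly one vertex out of every block of $2r+1$, and picking two vertices close together inside a block is wasteful. I would formalize this by partitioning $C_N$ into $N/(2r+1)$ consecutive blocks of length $2r+1$ and arguing that an algorithm achieving approximation ratio $2r+1-\delta$ must, on a positive constant fraction of blocks determined purely by the local identifier pattern, select essentially one vertex, and moreover the selected vertex's position within the block is determined by the $O(r)$-neighborhood.

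\textbf{Step 2: Extracting an independent set.} The idea is that the ``choice of position within a block'' is an $o(\log^\ast N)$-round-computable function of the identifiers, and on a long monochromatic (Ramsey) stretch this choice is forced to be periodic; one then reads off from these forced positions a set on the contracted cycle (one node per block) that behaves like an independent-set solution, and the approximation guarantee $2r+1-\delta$ translates into a $(1-\delta')$-approximation for independent set with $\delta'$ a function of $\delta$ and $r$ — contradicting the black box once $N$ is large. The key quantitative point is that an algorithm that beats factor $2r+1$ by any constant $\delta$ cannot afford to double up within more than a $\delta/(2r+1)$-ish fraction of blocks, and this ``thrift'' is exactly the rigidity one needs to manufacture an independent set. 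I would also need $r \in o(\log^\ast n)$ so that an $O(r)$-round algorithm is still $o(\log^\ast n)$-round and the Ramsey argument applies uniformly across all the blocks; and the hypothesis $r > 1$ ensures $2r+1-\delta > 2$, i.e.\ the bound is nontrivial.

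\textbf{Main obstacle.} The delicate part is Step 1–2: making precise the claim that beating the factor $2r+1$ forces a near-canonical, locally-determined choice inside almost every block, and that these choices assemble into a valid independent-set-approximation on a well-defined contracted instance — handling boundary effects between blocks, the finitely many ``wasteful'' blocks, and the exact bookkeeping of how $\delta$ degrades into $\delta'$. An alternative, possibly cleaner route that avoids blocks is a direct indistinguishability argument: run the hypothetical \distr algorithm on $C_N$ with a carefully chosen identifier assignment on which (by Ramsey) the output is periodic with some period $p \le 2r+1$; periodicity forces $|D| = N/p$, and one shows that periodic solutions of period $p < 2r+1$ are ruled out for large $N$ unless they waste a $\ge \delta$ fraction, while period exactly $2r+1$ would require the algorithm to have ``phase-locked'' globally, which an $o(\log^\ast n)$-round algorithm cannot do. I expect the final write-up to follow whichever of these two framings makes the constant-tracking least painful, but the conceptual content — Ramsey-forced periodicity plus a counting argument that periodicity is incompatible with a $(2r+1-\delta)$-approximation — is the same either way.
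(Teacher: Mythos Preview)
Your high-level instinct is right --- reduce to the known lower bound for large independent sets on cycles --- but you are working much too hard in the reduction, and the block-based Step~1--2 as written has real gaps: blocks are a global construct that a local algorithm cannot see, the ``position within a block'' function is not well-defined when a block contains zero or several vertices of $D$, and the passage to an independent-set instance on a contracted cycle is never made precise. Your alternative Ramsey/periodicity route could be pushed through, but it essentially re-derives the black box rather than invoking it.

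The paper's reduction is dramatically simpler and sidesteps all of this. Suppose \loweralg outputs a $(2r+1-\delta)$-approximate \distr $D$ on $C_n$ (take $n$ a multiple of $2r+1$). Then
\[
  |D| \;\le\; (2r+1-\delta)\cdot \frac{n}{2r+1} \;=\; (1-\delta')\,n,
  \qquad \delta' \defeq \frac{\delta}{2r+1}>0,
\]
so the complement $V\setminus D$ has at least $\delta' n$ vertices. Because $D$ is a \distr, $V\setminus D$ is a disjoint union of paths, each of length at most $2r$. On each such path, pick the adjacent $D$-vertex of smaller ID as an anchor and take the vertices at odd distance from it; this is an independent set containing at least half the path, and it is computable in $O(r)$ further rounds. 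Altogether one obtains an independent set of size at least $\delta' n/2 \in \Omega(n) \subseteq \Omega(n/\log^\ast n)$ in $O(r)\subseteq o(\log^\ast n)$ rounds, contradicting the black box.

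The idea you missed is that one never needs to analyse \emph{where} the vertices of $D$ sit or to partition the cycle into blocks; it suffices that $D$ is small, so its complement is large, and that the complement automatically decomposes into short paths on which a near-maximum independent set is trivial to compute locally.
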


We will formally introduce the notion of bounded expansion in the
preliminaries, for the moment it is fine to imagine these as
generalizations of bounded degree graphs and $H$-minor-free
graphs. For more information on the relation between sparse graph
classes we refer the reader to~\Cref{fig:classes}.

\begin{figure}[t]
  \centering
  \includegraphics[scale=0.38]{./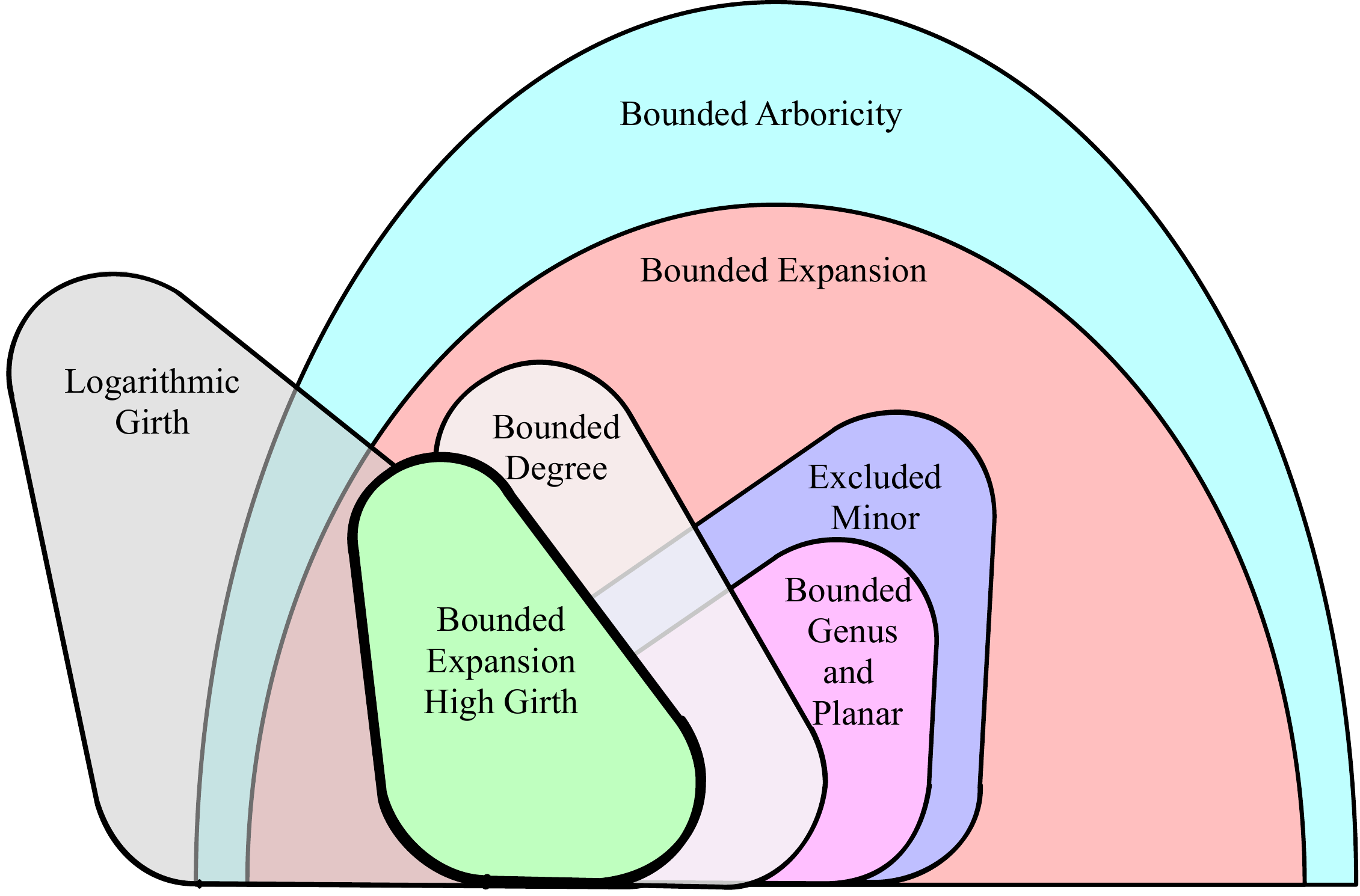}
  \caption{Diagram of the relation of sparse graph classes. The graph class
  in lower bound construction of Kuhn et al.~\cite{Kuhn:2016:LCL:2906142.2742012}
  is a subclass of logarithmic girth class.
  The bounded expansion class is a
  subclass of bounded arboricity class.  Bounded expansion is also a superclass
  of many common sparse graph classes: planar, bounded genus, excluded
  minors, and bounded
  degree. The class of bounded expansion with high girth intersects
  each of the other four classes, but neither contains nor is fully contained in any of them.}
  \label{fig:classes}
\end{figure}

\subparagraph*{Paper organization:}
First, we introduce our notation in~\Cref{sec:model},
then in~\Cref{sec:algo} we explain the algorithm, and
an analysis of its running
time, correctness, and approximation guarantee is given.
We also show that our analysis is asymptotically tight by providing an example
that matches the analysis. In~\Cref{sec:hardness} we provide a
lower bound construction and its correctness proof.
Finally, we conclude our work in~\Cref{sec:conclude}.

\section{Preliminaries}\label{sec:model}

We assume basic familiarity with graph theory.
In the following, we introduce
some of the basic graph notations to avoid ambiguities. We refer the reader
to the book by Diestel~\cite{DBLP:books/daglib/30488} for further reading.

\subparagraph*{Graph, Neighborhood, Distance-$r$:}
We will only consider simple, undirected graphs $G=(V,E)$.
For $u, v \in V$, define
$d(u,v)$ as the distance (in number of edges) between the two vertices.
For a set $S \subseteq V$, we define $d(u,S)$ as the
distance between vertex $u$ and any vertex in $S$.

Two vertices
$u,v\in V$ are neighbors in $G$ if there is an edge
$e=\{u,v\}\in E$.
We extend this definition to the distance-$r$ neighborhood $N^r[v]$ and open distance-$r$
neighborhood $N^r(v)$ of a vertex $v$ in the following way:
\begin{align*}
  N^{r}[v] &\defeq \{ u \in V \mid d(u,v) \leq r \} \\
  N^{r}(v) &\defeq N^r[v] \setminus \{v\}
\end{align*}
Similarly for a set $S \subseteq V$ we define: 
\begin{align*}
  N^r[S] &\defeq \bigcup_{v\in S} N^r[v] \\
  N^r(S) &\defeq N^r[S] \setminus S
\end{align*}

\subparagraph*{Girth, Radius:}
The girth $g$ of a graph $G$ is the length of its shortest cycle, or $\infty$ if acyclic.
The radius $R$ of $G$ is the minimum integer $R$ such that there is a
vertex $v\in V$ so that $N^R[v]=V$.

\subparagraph*{\DISTR:}
A set $M\subseteq V$ is a
\distr if $V=N^r[M]$. If additionally there is no smaller such set,
then $M$ is a minimum \distr
of $G$.

\subparagraph*{Edge Density, $r$-Shallow Minor, Expansion:}
Let $G=(V,E)$ be a graph, its edge density is $|E|/|V|$. A graph $H$
is an $r$-shallow minor of $G$ if $H$ can be obtained from $G$ by the
following operations. 
First, we take subgraph $S$ of $G$ and then partition vertices of $S$ into vertex disjoint
connected subgraphs $S_1,\ldots,S_t$ of $S$, each of them of radius at most $r$ and, at
the end contract each $S_i$ ($i\in [t]$) to a single vertex to obtain
$H$. We denote by $\nabla_r(G)$
the maximum edge density among all $r$-shallow minors of the graph $G$.

A graph class $\mathcal{C}$ is bounded expansion if there is a
function $f:\N\rightarrow\N$ such that for every graph
$G\in\mathcal{C}$ and integer $r\in \N$ we have $\nabla_r(G)\le
f(r)$; here $f$ is the \emph{expansion function}. A class of graphs
$\mathcal{C}$ has constant expansion if for every integer $r$ we have
$f(r)\in O(1)$. 

Every planar, bounded genus and, excluded minor graph is a
constant expansion graph. Every class of bounded degree graphs is also
bounded expansion, but not of constant expansion. 
For more information on bounded expansion
graphs, we refer the reader to the book of Ne{\v{s}}et{\v{r}}il and
Ossona de Mendez~\cite{nevsetvril2012sparsity}.

\subparagraph*{\local and \congest model of computation:}
The \local model of
computation assumes that the problem is being solved in a distributed manner:
Each vertex in the graph is also a computational node, the input graph
is also the communication graph, and initially, each vertex only knows its own unique identifier and its neighbors.
An algorithm proceeds in synchronous rounds on each vertex in parallel.
In each round, the algorithm can run an arbitrary amount of local computation,
send a message of arbitrary size to each of the neighboring vertices, and then receive all messages from its neighbors.
At the end of the round, each vertex can decide locally whether it wants to halt with an output, or continue.
The most common metric is the number of synchronous rounds, i.e.~the number of communication rounds.

This model first introduced by Linial~\cite{linial}, later
Peleg~\cite{Peleg:2000:DCL:355459} named it \local model.

The \congest model is very similar to the \local model, except
that identifiers can be represented in $O(\log n)$ bits,
and each message can only hold $O(\log n)$ bits,
where $n$ is the number of vertices in the network.

\section{Distributed Approximation Algorithm for Dominating Set}\label{sec:algo}

We present an algorithm that solves $r$-MDS with approximation factor
$O(r \cdot f(r))$ in time $O(r)$ on graphs with girth at least $\mingirth$.
Specifically, we provide an algorithm that proves the following theorem.

\begin{apthm}{thm:algorithm}
    Let $\mathcal{C}$ be a graph class of bounded expansion $f(r)$ and girth at
    least $\mingirth$. There is a \congest algorithm that runs in $O(r)$ rounds
    and provides an $O(r\cdot f(r))$-approximation of minimum
    \distr on $\mathcal{C}$.
\end{apthm}

We prove this by providing \Cref{alg:rmds} satisfying all bounds.
At its core, the algorithm is simple:
Each vertex computes the size of its distance-$r$ neighborhood, i.e.~the distance-$r$ degree.
This degree is propagated, so that each vertex selects in its distance-$r$ neighborhood the vertex with the highest such degree.
The output is the set of all selected vertices.
We expect this to yield a good approximation because only few candidates can be selected.

\Cref{alg:rmds} defines this formally.
The main technical contribution is \Cref{thm:mdsapprox},
which concludes that \Cref{alg:rmds} is correct and satisfies all bounds in \Cref{thm:algorithm}.

\begin{algorithm}[h]
\caption{CONGEST computation of $r$-MDS, on each vertex $v$ in parallel}
\begin{algorithmic}[1]
\STATE Compute $\sizeof{N^{r}(v)}$, e.g.~using \Cref{alg:neighborhood}
\STATE // Select the vertex with the highest degree:
\STATE $(\varprio{v}, \varid{v}) \defeq (\sizeof{N^{r}(v)}, v)$
\FOR{$r$ rounds}
    \STATE Send $(\varprio{v}, \varid{v})$ to all neighbors
    \STATE Receive $(\varprio{u}, \varid{u})$ from each neighbor $u$
    \STATE $(\varprio{v}, \varid{v}) \defeq \max_{u \in N^{1}[v]}\{ (\varprio{u}, \varid{u}) \}$ \label{algline:maxtuples}
    \STATE Remember all received messages that contained $(\varprio{v}, \varid{v})$
\ENDFOR{} \label{algline:votingcomplete}
\STATE // Propagate back to the selected vertex:
\STATE $\Malg^{v} \defeq \{ \varid{v} \}$
\FOR{$r - 1$ rounds}
    \FOR{each neighbor $u \in N^{1}(v)$}
        \STATE Determine which vertices sent by $u$ are in $\Malg^{v}$
        \STATE Send these to $u$, encoded as a bitset of size $r$
    \ENDFOR{}
    \STATE Receive bitsets, extend $\Malg^{v}$ accordingly
\ENDFOR{}
\RETURN $v \in \Malg^{v}$
\end{algorithmic}\label{alg:rmds}
\end{algorithm}

We say that~\Cref{alg:rmds} computes a set $\Malg$ by returning $\top$
for all vertices in the set, and $\bot$ for all others.
Naturally, messages and comparisons only consider the ID of vertices, and not the vertices themselves.
This abuse of notation simplifies the algorithm and analysis.
In line \ref{algline:maxtuples} we order tuples lexicographically:
Tuples are ordered by the first element (the size of the distance-$r$ neighborhood); ties are broken by the second element (the ID of the vertex).

\begin{algorithm}[h]
\caption{\congest computation of $\sizeof{N^{r}(v)}$, on each vertex $v$ in parallel}
\begin{algorithmic}[1]
\STATE $n_{u} \defeq 1$ for all $u \in N^{1}(v)$
\FOR{$r - 1$ rounds}
    \STATE To each vertex $u \in N^{1}(v)$, send $1 + \sum_{w \in N^{1}(v) \setminus \{u\}} n_{w}$ \label{algline:computetree}
    \STATE $n_{u} \defeq$ the number received from $u$, for each $u \in N^{1}(v)$
\ENDFOR{}
\RETURN $\sum_{w \in N^{1}(v)} n_{w}$ \label{algline:computeneighborhood}
\end{algorithmic}\label{alg:neighborhood}
\end{algorithm}

Observe that counting the neighborhood of the
root in a rooted tree is easy; that the graph looks like a tree locally;
and that the messages sent across an edge (if any) are identical no matter which vertex is the root.
Line \ref{algline:computetree} avoids informing a vertex about its own subtree.

The remainder of this section proves the correctness, running time, and approximation factor for the algorithm.

\subsection*{Correctness}

First we will show basic correctness properties.
One can trivially check that all messages contain only $O(\log n)$ many bits.
Specifically, the bitsets have only size $r \in o(\log^{\ast} n)$.

\begin{lemma}
  \Cref{alg:neighborhood} computes the size of $N^{r}(v)$ for each vertex $v$ in parallel.
\end{lemma}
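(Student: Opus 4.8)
The plan is to prove the claim by induction on the number of rounds executed in \Cref{alg:neighborhood}, tracking what the local variable $n_u$ represents after each round. The key invariant I would maintain is: after $k$ rounds (for $0 \le k \le r-1$), the value $n_u$ held by vertex $v$ for a given neighbor $u$ equals the number of vertices $w$ with $d(v,w) \le k+1$ whose unique shortest path (or, more precisely, whose walk) from $v$ starts with the edge $\{v,u\}$. Equivalently, $n_u$ counts the vertices in the "branch" hanging off $v$ through $u$ up to distance $k+1$.

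First I would exploit the high-girth hypothesis. Since the girth is at least $\mingirth = 4r+3 > 2r+1$, the ball $N^{r}[v]$ induces a tree: any cycle through $v$ would have length at most $2r$, contradicting the girth bound. Rooting this tree at $v$, every vertex $w \in N^r(v)$ lies in exactly one subtree rooted at a child of $v$, and $d(v,w)$ equals its depth. This justifies that the branch counts partition $N^r(v)$, so the final return value $\sum_{w \in N^1(v)} n_w$ is exactly $\sizeof{N^r(v)}$, provided the invariant holds at $k = r-1$.

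Next I would establish the inductive step. The base case $k=0$ is line~1: $n_u = 1$ for each neighbor $u$, which correctly counts the single vertex $u$ at distance $1$ in that branch. For the step, note that in round $k+1$, vertex $v$ sends to neighbor $u$ the quantity $1 + \sum_{w \in N^1(v) \setminus \{u\}} n_w^{(v)}$, and symmetrically receives from $u$ the quantity $1 + \sum_{x \in N^1(u) \setminus \{v\}} n_x^{(u)}$, which becomes $v$'s new $n_u$. By the induction hypothesis applied at $u$, each $n_x^{(u)}$ (for $x \ne v$) counts vertices at distance $\le k+1$ from $u$ in the branch through $x$; adding $1$ for $u$ itself and summing gives the count of vertices at distance $\le k+2$ from $v$ lying in the branch through $u$. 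The crucial point is that the "$\setminus \{v\}$" in line~3 prevents double-counting: it excludes exactly the branch of $u$'s tree that points back toward $v$, which is consistent with the tree structure guaranteed by high girth. After $r-1$ rounds we reach $k = r-1$, so $n_u$ counts vertices at distance $\le r$ in each branch, and the sum over all neighbors is $\sizeof{N^r(v)}$.

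The main obstacle, and the place where care is needed, is making the branch-decomposition argument fully rigorous: one must argue that the local view of the algorithm (which only ever sends a single number per edge per round) genuinely corresponds to counting in a global tree, i.e.\ that no vertex within distance $r$ is reachable from $v$ by two distinct edge-disjoint short walks. This is precisely where the girth bound $\ge 4r+3$ is used — in fact a bound of $2r+1$ would already force $N^r[v]$ to be a tree, and the slack up to $4r+3$ is presumably needed later in \Cref{alg:rmds} where $r$-balls around distinct candidates interact — but for this lemma the tree property of $N^r[v]$ alone suffices, and I would state it as a standalone observation before running the induction.
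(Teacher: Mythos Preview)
Your proposal is correct and follows essentially the same approach as the paper: both argue that high girth makes the $r$-neighborhood of each vertex a tree, then prove by induction on rounds that $n_u$ stores the size of the branch through $u$ (the paper phrases this as $\sizeof{T_{u,i}^{\neg v}}$, you phrase it as the count of vertices whose path from $v$ begins with the edge $\{v,u\}$, but these are the same quantity). Your remark that girth $2r+1$ already suffices for this particular lemma is also accurate.
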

\begin{proof}
  First, observe that in only $r - 1$ rounds of communication, no cycle can be detected, as the girth is at least $\mingirth$.
  This means that $N^{i}(v)$ is a tree for every $i \leq r-1$ and $v \in V$.
  We define the tree $T_{u, i}^{\neg v}$ as the (set of vertices in the) tree of edge-depth $i$, rooted at vertex $u$,
  excluding the branch to vertex $v$, where $v$ is a neighbor of vertex $u$.

  Now we can prove by induction: At vertex $v$, after the $i$-th round\footnote{
  We interprete \enquote{after the zeroth round} as \enquote{before the first round}}{}
  (where $0 \leq i \leq r - 1$),
  $n_{u}$ stores the size of the tree $T_{u, i}^{\neg v}$.

  For the induction basis $i=0$, we know $\forall u, v: n_{u} = 1 = \sizeof{T_{u,0}^{\neg v}} = \sizeof{\{u\}}$.

  This leaves the induction step: At the beginning of the $i$-th round (for $1 \leq i \leq r-1$),
  we know that $n_{u} = \sizeof{T_{u, i - 1}^{\neg v}}$ by the induction hypothesis, for every $u, v$.
  Consider vertex $v$.  By construction, its distance-$i$ open neighborhood is the
  union of all edge-depth $i-1$ trees of $v$'s neighbors, so: $N^{i}(v) = \bigcup_{u \in N^{1}(v)} T_{u,i - 1}^{\neg v}$.
  Due to the high girth requirement, we know that all sets in this union are disjoint.
  Vertex $v$ can therefore compute $\sizeof{N^{i}(v)}$ by summing up all its $n_{u}$s,
  and can even compute $\sizeof{T_{v,i}^{\neg u}}$ for an arbitrary vertex $u$ by subtracting the corresponding $n_{u}$ again.
  This is exactly what happens in line \ref{algline:computetree}.
  Then $v$ sends $\sizeof{T_{v,i}^{\neg u}}$ to each neighbor $u$, which stores it in the corresponding variable $n_{v}$.
  By symmetry, this also means that vertex $v$ now has stored $\sizeof{T_{u,i}^{\neg v}}$ in $n_{u}$, thus proving the induction step.

  With the meaning of $n_{u}$ established, line~\ref{algline:computeneighborhood}\ must compute $\sizeof{N^{r}(v)}$.
\end{proof}

Next we show that \Cref{alg:rmds} selects the maximum degree neighbor:

\begin{lemma}
  \label{lem:selectneighbor}
  In \Cref{alg:rmds}, when the selection phase is over (line \ref{algline:votingcomplete} and onward),
  each vertex $v$ has selected a vertex $\varid{v}$.
  This is the unique vertex $\argmax_{u \in N^{r}[v]} \{ ( \sizeof{N^{r}(u)}, u ) \}$.
\end{lemma}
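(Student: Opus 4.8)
The plan is to prove, by induction on the round index $i$ with $0 \le i \le r$, the invariant: after the $i$-th iteration of the first (length-$r$) \texttt{for} loop in \Cref{alg:rmds}, every vertex $v$ holds $(\varprio{v}, \varid{v}) = \max_{u \in N^{i}[v]}\{(\sizeof{N^{r}(u)}, u)\}$, where the maximum is taken lexicographically. Applying this with $i = r$ gives the lemma, once we observe that this maximum is \emph{uniquely} attained: the first coordinates are natural numbers and ties among them are resolved by the vertex ID, which is globally unique, so on the finite set $\{(\sizeof{N^{r}(u)}, u) : u \in N^{r}[v]\}$ the lexicographic order is a strict total order and hence has a unique maximum; call its ID-component $u^{\ast}$. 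This $u^{\ast}$ is precisely $\argmax_{u \in N^{r}[v]}\{(\sizeof{N^{r}(u)}, u)\}$, and the invariant says $\varid{v} = u^{\ast}$ when the loop ends.

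The base case $i = 0$ (read as the state before the first iteration) is immediate: line~3 sets $(\varprio{v}, \varid{v}) = (\sizeof{N^{r}(v)}, v)$, and $N^{0}[v] = \{v\}$. For the inductive step, suppose the invariant holds after round $i-1$ for all vertices, $1 \le i \le r$. In round $i$, vertex $v$ transmits its current tuple and receives from each neighbor $w$ the tuple $\max_{u \in N^{i-1}[w]}\{(\sizeof{N^{r}(u)}, u)\}$; line \ref{algline:maxtuples} then replaces $(\varprio{v}, \varid{v})$ by the lexicographic maximum of $v$'s own tuple $\max_{u \in N^{i-1}[v]}\{(\sizeof{N^{r}(u)}, u)\}$ together with all received ones. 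Since the maximum of a collection of subset-maxima equals the maximum over the union of those subsets, and since $\bigcup_{w \in N^{1}[v]} N^{i-1}[w] = N^{i}[v]$ (a vertex has distance $\le i$ from $v$ exactly when it has distance $\le i-1$ from some vertex of $N^{1}[v]$), vertex $v$ now holds $\max_{u \in N^{i}[v]}\{(\sizeof{N^{r}(u)}, u)\}$, which closes the induction.

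I do not anticipate a genuine obstacle: this is the standard \enquote{flooding of the maximum} argument, and the only point needing a little care is the strict-total-order observation that makes the argmax unique (which in turn relies on IDs being unique, as guaranteed by the model). Note that the girth hypothesis is not used in this lemma; it enters only in the companion statement that \Cref{alg:neighborhood} correctly computes the values $\sizeof{N^{r}(v)}$ that seed the tuples here. The bookkeeping in the last line of the loop (remembering which received messages carried the surviving tuple) is irrelevant to the present lemma and will only be used afterwards to route the membership decision back to $u^{\ast}$.
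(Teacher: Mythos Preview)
Your proof is correct and follows essentially the same approach as the paper: a straightforward induction on the round index showing that after round $i$ each vertex holds the lexicographic maximum over $N^{i}[v]$, together with the observation that unique IDs make the argmax unique. The paper's version is terser (it just says the induction is ``straight-forward'' and appeals to commutativity/associativity of $\max$), whereas you spell out the union identity $\bigcup_{w\in N^{1}[v]} N^{i-1}[w]=N^{i}[v]$ and the strict-total-order point explicitly, but the argument is the same.
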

\begin{proof}
  By construction, only tuples of the form $( \sizeof{N^{r}(w)}, w )$ with $w \in V$ are ever stored.
  The $\max$ operator is commutative and associative, so it is sufficient to prove that
  each vertex $v$ considers precisely the tuples for $w \in N^{r}[v]$.
  This can be shown by straightforward induction: After round $i$, vertex $v$ considers precisely the tuples for $w \in N^{i}[v]$.
  The base case is $i = 0$, the induction step is straight-forward.
\end{proof}

This shows that each vertex $v$ selects the maximum vertex in $v$'s neighborhood.
Next, we show that this selection is back-propagated:

\begin{lemma}
  If there is a vertex $u$ that selects $v$ ($\varid{u} = v$), then $v \in \Malg^{v}$.
\end{lemma}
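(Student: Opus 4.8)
The plan is to combine \Cref{lem:selectneighbor} with a path-tracing argument: first follow the information \emph{down} a shortest $u$--$v$ path during the selection phase, then follow it \emph{back up} that same path during the back-propagation phase. By \Cref{lem:selectneighbor}, the hypothesis $\varid{u}=v$ means that $(\sizeof{N^{r}(v)},v)$ is the lexicographically largest among the tuples $(\sizeof{N^{r}(w)},w)$ with $w\in N^{r}[u]$; in particular $v\in N^{r}[u]$, so $d(u,v)=k$ for some $k\le r$. Since the girth is at least $\mingirth$, the radius-$r$ ball around $u$ induces a tree, so there is a unique shortest $u$--$v$ path $P=(v=p_{0},p_{1},\dots,p_{k}=u)$; moreover $N^{i}[p_{i}]\subseteq N^{r}[u]$ for every $i\le k$, since any $w$ with $d(p_{i},w)\le i$ satisfies $d(u,w)\le(k-i)+i\le r$. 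This containment is the workhorse of both halves of the argument.

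For the selection phase I would show, by induction on $i$, that immediately after selection round $i$ the winning tuple stored at $p_{i}$ equals $(\sizeof{N^{r}(v)},v)$, so that the ``remember'' step of round $i$ records that $p_{i-1}$ sent exactly this tuple to $p_{i}$. The base case $i=0$ is the initialisation. For the step, by the induction hypothesis $p_{i-1}$ holds $(\sizeof{N^{r}(v)},v)$ at the start of round $i$ and transmits it, and---exactly as in the proof of \Cref{lem:selectneighbor}---the new winner at $p_{i}$ is the lexicographically largest $(\sizeof{N^{r}(w)},w)$ over $w\in N^{i}[p_{i}]$; since $v\in N^{i}[p_{i}]\subseteq N^{r}[u]$, that largest tuple is $(\sizeof{N^{r}(v)},v)$ itself, and the message recorded from $p_{i-1}$ did carry it.

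For the back-propagation phase I would run the induction along $P$ in reverse: $v\in\Malg^{p_{k-j}}$ once back-propagation round $j$ is complete. The base case is $\Malg^{u}=\{\varid{u}\}=\{v\}$. For the step, $p_{k-j+1}$ already has $v\in\Malg^{p_{k-j+1}}$, and by the selection-phase claim (applied with $i=k-j+1$) it recorded that its neighbour $p_{k-j}$ sent the tuple $(\sizeof{N^{r}(v)},v)$; hence in this round $p_{k-j+1}$ includes $v$ among the vertices it sends back to $p_{k-j}$, which then adds $v$ to its set. Propagating along all of $P$ delivers $v$ into $\Malg^{p_{0}}=\Malg^{v}$, which is the claim.

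I expect the delicate point to be the first induction: one must ensure that $(\sizeof{N^{r}(v)},v)$ is the winner at $p_{i}$ \emph{at the instant} the ``remember'' step of round $i$ executes---so that the backward link $p_{i-1}\to p_{i}$ is actually stored---even though the winner at $p_{i}$ may strictly increase above this tuple in later rounds; the inclusion $N^{i}[p_{i}]\subseteq N^{r}[u]$ is precisely what pins down the round-$i$ winner. A secondary chore is translating the informal phrase ``which vertices sent by $u$ are in $\Malg^{v}$'' into the formal record produced by the first induction, and checking that the round counters of the two phases line up so that the whole path $P$ is traversed.
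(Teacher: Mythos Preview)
Your proposal is correct and follows essentially the same approach as the paper's proof: induct along the path on which $v$'s tuple was forwarded during the selection phase, then trace the back-propagation along that same path. Your write-up is considerably more detailed than the paper's two-sentence sketch---in particular you make explicit the key containment $N^{i}[p_{i}]\subseteq N^{r}[u]$ that pins down the round-$i$ winner at each intermediate vertex, and you correctly flag the round-counter alignment as something to verify (indeed, the paper's own proof invokes ``after $r$ rounds'' of propagation while the loop is written with $r-1$ iterations, so any residual off-by-one is the paper's, not yours).
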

\begin{proof}
  Consider the path along which $v$ was forwarded 
  during the selection phase.
  By straight-forward induction one can see that after $i$ rounds of propagation,
  the first $i$ many vertices $w$ on this path (starting at $u$) have $v \in \Malg^{w}$.
  The path has length at most $r$ edges, so $v \in D^{v}$ after $r$ rounds.
\end{proof}

And because no further vertices are added into any $\Malg^{v}$, we get:

\begin{corollary}
  \label{cor:selectediscomputedset}
  The selected vertices are precisely the computed set:
  \begin{equation}
    \Malg = \{ \varid{v} \mid v \in V \}
  \end{equation}
\end{corollary}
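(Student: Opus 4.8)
The plan is to prove the two inclusions of the claimed identity $\Malg = \{\varid{v}\mid v\in V\}$ separately, leaning entirely on the lemmas already established. First I would record the bookkeeping fact that, by the convention fixing what \Cref{alg:rmds} computes, $v\in\Malg$ holds precisely when the algorithm returns $\top$ at $v$, i.e.\ precisely when $v\in\Malg^{v}$ holds after the back-propagation phase; this is the only link needed between the global object $\Malg$ and the per-vertex sets $\Malg^{v}$.

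For the inclusion $\{\varid{v}\mid v\in V\}\subseteq\Malg$, take any $x=\varid{u}$ with $u\in V$. The immediately preceding lemma (applied with its ``$v$'' instantiated as $x$) gives $x\in\Malg^{x}$, hence $x\in\Malg$. So every vertex selected by someone lies in the computed set.

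For the reverse inclusion I would set up the invariant that, at every point of the back-propagation phase and for every vertex $w$, one has $\Malg^{w}\subseteq\{\varid{u}\mid u\in V\}$, and prove it by induction on the round counter, the inductive statement being universally quantified over all vertices. The base case is immediate, since $\Malg^{w}$ is initialized to the singleton $\{\varid{w}\}$. In the inductive step, the only way $\Malg^{w}$ grows is via the IDs carried in the bitsets $w$ receives from its neighbors; by construction each such ID already lies in the sending neighbor's own $\Malg$-set, so the induction hypothesis applied to that neighbor places it in $\{\varid{u}\mid u\in V\}$. Evaluating the invariant at termination turns $v\in\Malg$ (i.e.\ $v\in\Malg^{v}$) into $v\in\{\varid{u}\mid u\in V\}$, which is exactly the missing inclusion; combining the two inclusions yields \Cref{cor:selectediscomputedset}.

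I do not anticipate a genuine obstacle here. The one spot deserving care is making the back-propagation invariant literally preservation-proof: one must observe that a vertex only ever forwards IDs that currently sit in its own $\Malg$-set, so no ``stale'' intermediate winner from the voting phase (which need not be anyone's final selection) can leak into $\Malg^{w}$, and one must state explicitly that membership in the returned set $\Malg$ is decided by the final value of $\Malg^{v}$ and nothing else.
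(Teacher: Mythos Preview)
Your proposal is correct and follows the same line as the paper: one inclusion comes from the preceding lemma, and the other from the observation that the back-propagation only ever passes around IDs already sitting in some $\Malg^{w}$, which you make precise via an inductive invariant. The paper compresses your entire second inclusion into the single clause ``because no further vertices are added into any $\Malg^{v}$''; your version is a faithful unpacking of that sentence, including the care about stale intermediate winners, so there is no genuine methodological difference.
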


Together with \Cref{lem:selectneighbor}, this shows that \Cref{alg:rmds} indeed computes a dominating set:

\begin{lemma}
  The computed set $\Malg$ is a distance-$r$ dominating set.
\end{lemma}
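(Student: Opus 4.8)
The plan is to show that every vertex $v\in V$ is within distance $r$ of some vertex in $\Malg$. Fix an arbitrary vertex $v$. By \Cref{lem:selectneighbor}, during the selection phase $v$ computes $\varid{v} = \argmax_{u \in N^{r}[v]}\{(\sizeof{N^{r}(u)}, u)\}$, which in particular is a vertex in $N^{r}[v]$, so $d(v, \varid{v}) \le r$. It remains only to argue that $\varid{v}$ actually ends up in the computed set $\Malg$.

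For that, I would invoke the back-propagation lemma: since $v$ itself is a vertex that selects $\varid{v}$ (i.e.\ $\varid{v} = \varid{v}$ with $v$ playing the role of the selecting vertex $u$ in that lemma), we conclude $\varid{v} \in \Malg^{\varid{v}}$. By \Cref{cor:selectediscomputedset}, $\Malg = \{ \varid{w} \mid w \in V\}$, so $\varid{v} \in \Malg$. Hence $v$ has a vertex of $\Malg$ within distance $r$, and since $v$ was arbitrary, $V = N^{r}[\Malg]$, which is exactly the definition of a \distr given in \Cref{sec:model}.

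The argument is essentially a one-line assembly of the three preceding lemmas and the corollary, so there is no real obstacle here; the only thing to be slightly careful about is the degenerate interplay between $N^{r}(\cdot)$ and $N^{r}[\cdot]$ — namely that $\argmax$ in \Cref{lem:selectneighbor} ranges over the closed neighborhood $N^{r}[v]$ (so $\varid{v}$ may be $v$ itself, which is fine), while the priorities compared are sizes of open neighborhoods $\sizeof{N^{r}(\cdot)}$ (irrelevant to domination, only to the approximation bound proved later). One should also note that the case $r = 0$ is vacuous/trivial and the girth hypothesis \mingirth{} is what guarantees the structural facts already used in the cited lemmas, so nothing new about girth is needed at this step.
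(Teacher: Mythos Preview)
Your proposal is correct and follows essentially the same approach as the paper: both combine \Cref{lem:selectneighbor} (so $\varid{v}\in N^{r}[v]$) with \Cref{cor:selectediscomputedset} (so $\varid{v}\in\Malg$) to conclude that every $v$ is dominated. The only cosmetic difference is that the paper phrases it as a two-line proof by contradiction, whereas you argue directly and add some commentary; the explicit appeal to the back-propagation lemma is already absorbed into \Cref{cor:selectediscomputedset} and is not needed separately.
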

\begin{proof}
  Assume towards contradiction that a vertex $v$ is not dominated.
  \Cref{lem:selectneighbor} shows that $v$ selected a vertex $\varid{v}$ in its distance-$r$ neighborhood.
  \Cref{cor:selectediscomputedset} shows that $\varid{v} \in \Malg$,
  which distance-$r$ dominates $v$ in contradiction to the original assumption.
\end{proof}

The time complexity analysis is trivial:

\begin{lemma}
  \Cref{alg:rmds} runs in $O(r)$ rounds.
\end{lemma}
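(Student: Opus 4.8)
The plan is simply to tally the communication rounds contributed by each phase of \Cref{alg:rmds}, using the fact that in the \local and \congest models only communication rounds are counted, while arbitrarily much local computation is free.

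First I would account for line~1: it invokes \Cref{alg:neighborhood}, whose body is a single \texttt{FOR} loop of $r-1$ iterations, each iteration consisting of one send to every neighbor followed by one receive (lines~3--4 of \Cref{alg:neighborhood}); the final aggregation in line~\ref{algline:computeneighborhood} is purely local. Hence line~1 costs exactly $r-1$ rounds. Next, the selection phase (the first \texttt{FOR} loop, lines~4--9 of \Cref{alg:rmds}) runs for exactly $r$ iterations, and each iteration is one send, one receive, and then the local lexicographic $\max$ of line~\ref{algline:maxtuples} together with the bookkeeping of line~8; so this phase costs $r$ rounds. Finally the back-propagation phase (lines~12--18) runs for $r-1$ iterations, each again consisting of one round of exchange (sending the size-$r$ bitsets) plus local updates of $\Malg^{v}$; this costs $r-1$ rounds. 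Summing, the total is $(r-1)+r+(r-1)=3r-2\in O(r)$.

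The one thing to be mildly careful about — and it is the only ``obstacle'', if one can call it that — is to confirm that every loop iteration is indeed a \emph{single} communication round, i.e.\ that no iteration hides a nested round of communication, and that all the work done between sends and receives (computing the sum $1+\sum_{w\in N^1(v)\setminus\{u\}} n_w$, taking lexicographic maxima, deciding which forwarded IDs lie in $\Malg^{v}$, encoding/decoding bitsets) is local and therefore does not contribute to the round count. This is immediate from the pseudocode. I would close by remarking that since $r$ is a constant (indeed $r\in o(\log^{\ast} n)$ suffices), $O(r)$ is a constant number of rounds, matching the claim of \Cref{thm:algorithm}.
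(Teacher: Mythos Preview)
Your proposal is correct and follows essentially the same approach as the paper, which simply observes that each loop has $O(r)$ iterations of constantly many rounds each; you have just made the count explicit by summing $(r-1)+r+(r-1)=3r-2$. There is nothing to add.
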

\begin{proof}
  Each loop takes $O(r)$ iterations, and each iteration takes constantly
  many rounds. Therefore, the overall algorithm takes $O(r)$ rounds.
\end{proof}

This concludes the basic correctness properties.
What remains to be shown is the approximation quality.

\subsection*{Approximation Analysis}

In this subsection, we prove the approximation bound in \Cref{thm:mdsapprox}.
Specifically, we prove that the size of $\Malg$, the set of selected vertices, is within factor
$1 + 4 \cdot r \cdot f(r) \in O(r \cdot f(r))$ of the size of $\Mopt$,
a minimum distance-$r$ dominating set.

\begin{lemma}
  \label{thm:mdsapprox}
  If the graph class $\mathcal{C}$ has expansion $f(r)$ and girth at least $\mingirth$,
  then the set of vertices $\Malg$ selected by \Cref{alg:rmds} is small:
  ${\sizeof{\Malg}}/{\sizeof{\Mopt}} \in O(r \cdot f(r))$.
\end{lemma}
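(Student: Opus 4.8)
The plan is to charge the cost of the selected set $\Malg$ against the optimum $\Mopt$ by way of an auxiliary shallow minor whose edge density is bounded by $f(r)$. First I would recall that every selected vertex $w \in \Malg$ is, by construction, the (lexicographic) maximum distance-$r$ degree vertex in $N^{r}[v]$ for some $v$ that selected it; in particular $w$ is itself the maximizer in its own $N^{r}[w]$. Fix a minimum distance-$r$ dominating set $\Mopt$. Since $\Mopt$ dominates every vertex, each $w \in \Malg$ has some $m(w) \in \Mopt$ with $d(w, m(w)) \le r$; fix such a map $m \colon \Malg \to \Mopt$. The key structural claim will be that the preimage $m^{-1}(m_0)$ of any single $m_0 \in \Mopt$ cannot be too large — essentially bounded by something like $4 r f(r)$ — which immediately gives $\sizeof{\Malg} \le 4 r f(r) \cdot \sizeof{\Mopt}$ and hence the lemma. (The stray additive $1$ in the statement handles the degenerate bookkeeping when $\Mopt$ is tiny or the map is not quite surjective onto the relevant set.)

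To bound $\sizeof{m^{-1}(m_0)}$ I would build a graph $H$ on vertex set $\Malg$ (or on $\Malg$ together with $\Mopt$) by placing, for each $w \in m^{-1}(m_0)$, the path of length $\le r$ realizing $d(w, m_0)$, and then contracting. More precisely: take the subgraph $S$ of $G$ consisting of all these geodesics for all $m_0 \in \Mopt$ simultaneously; the branch sets are the stars centered at each $m_0 \in \Mopt$ consisting of that vertex plus the geodesic segments emanating from it (these have radius $\le r$), while each $w \in \Malg$ that is not swallowed becomes a singleton branch set. Contracting yields a graph $H$ with $\sizeof{V(H)} \le \sizeof{\Malg} + \sizeof{\Mopt}$ in which every $w \in \Malg$ is adjacent to its $m(w)$. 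By definition $H$ is an $r$-shallow minor of $G$, so $\sizeof{E(H)} \le f(r) \cdot \sizeof{V(H)}$. The subtlety is that this only bounds the \emph{average} degree, whereas I want to bound the degree of a single vertex $m_0$; the high-girth hypothesis is what rescues this — because girth is $\ge \mingirth = 4r+3$, the geodesics from distinct $w$'s to the same $m_0$ are essentially disjoint away from $m_0$ and the distance-$r$ neighborhoods behave like trees, so a counting/averaging argument over a suitable ball forces a uniform per-vertex bound rather than merely an average one. Concretely I expect to argue that if some $m_0$ had more than $4 r f(r)$ preimages, one could exhibit, inside $N^{r}[m_0]$ or a modest dilation of it, a shallow minor (again using girth to keep the contracted pieces tree-like and of radius $\le r$) whose edge density exceeds $f(r)$, a contradiction.

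There is a second ingredient I should not omit: why "few candidates can be selected" in the sense that selected vertices are not clustered. Here the "highest distance-$r$ degree" tie-breaking rule matters. If two selected vertices $w_1, w_2$ were within distance $2r$ of each other, then both being local maxima of the distance-$r$ degree over radius-$r$ balls, combined with girth $\ge 4r+3$ forcing their $r$-balls to be trees that overlap in a very controlled way, should yield a contradiction or at least let me treat $\Malg$ as an $r$-independent-ish set for the density estimate. I would use this to guarantee that the branch sets / singleton vertices assembled for $H$ really are vertex-disjoint and of radius $\le r$, which is exactly what the definition of $r$-shallow minor demands.

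\textbf{Main obstacle.} The genuinely hard step is converting the \emph{global average} degree bound $\sizeof{E(H)} \le f(r)\sizeof{V(H)}$ coming from bounded expansion into a \emph{local per-optimum-vertex} bound on $\sizeof{m^{-1}(m_0)}$. Bounded expansion a priori allows one vertex to have enormous degree in a shallow minor as long as the average is small, so a naive application fails; the work is in using the girth $\ge 4r+3$ assumption to localize — replacing $G$ by the ball $N^{r}[m_0]$ (or a bounded dilation thereof), checking that the relevant contracted structure is still a legitimate $r$-shallow minor of that localized graph (radius $\le r$ of each branch set, disjointness — this is where the exact value $4r+3$ is spent), and only then invoking $\nabla_r \le f(r)$ to get the clean $4 r f(r)$ bound. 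I also anticipate minor friction in pinning down the precise constant $4$ and in handling the boundary case needed for the additive $+1$, but those are bookkeeping rather than conceptual.
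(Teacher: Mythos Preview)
Your proposal contains genuine gaps, both in the auxiliary claims and in the core strategy.

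First, the assertion that a selected vertex $w$ ``is itself the maximizer in its own $N^{r}[w]$'' is false: $w=\varid{v}$ only says $w$ beats every vertex of $N^{r}[v]$, and $N^{r}[w]$ may contain vertices outside $N^{r}[v]$ with strictly larger distance-$r$ degree. Your ``second ingredient'' --- that two selected vertices cannot lie within distance $2r$ --- therefore has no basis, and indeed it is false: in the paper's own tightness construction (the subdivided biclique with pendant leaves) the algorithm selects \emph{every} vertex on the $x$-half of each path $P_{x,y}$, so many pairwise adjacent vertices are selected. Hence $\Malg$ is nowhere near $r$-separated, and you cannot realise the $w$'s as disjoint singleton branch sets. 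Relatedly, your shallow-minor construction is incoherent: if each branch set around $m_0$ has radius $r$ and each $w$ satisfies $d(w,m(w))\le r$, then every $w$ is swallowed by its branch set and no edge of $H$ records the relation $w\leftrightarrow m(w)$ at all.

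Second, even granting some repaired minor $H$, the plan to bound $\sizeof{m^{-1}(m_0)}$ \emph{per vertex} cannot be executed via expansion: $\nabla_r\le f(r)$ controls only average degree in shallow minors, and the high-girth hypothesis does not upgrade this to a maximum-degree bound (a star has $\nabla_r\le 1$ and infinite girth, yet unbounded degree; ``localising to $N^{r}[m_0]$'' just gives a tree, whose shallow minors are again stars). The paper avoids this entirely by a \emph{global} count. It partitions $V$ into Voronoi cells $H_m$ (one per $m\in\Mopt$, each a tree of radius $\le r$ by girth), contracts them to obtain an $r$-shallow minor $G'$ with $\sizeof{E'}\le f(r)\sizeof{\Mopt}$ inter-cell edges in total, and uses girth $4r+3$ to show any two cells share at most one edge. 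Then $\Malg=D^{O}\cup D^{I}$: vertices selected across a cell boundary ($D^{O}$) number at most $2r\cdot\sizeof{E'}$ since each crossing edge transmits at most $r$ IDs in each direction; vertices selected from within their own cell ($D^{I}$) are handled by the key structural lemma you are missing --- any such vertex must lie on the unique path in $H_m$ from $m$ to a boundary vertex, because otherwise its parent toward $m$ has strictly larger $r$-neighbourhood (this uses the tree structure of $H_m$ and that some vertex sits at distance exactly $r+1$ from $v$). The union of those center-to-boundary paths has size at most $\sizeof{\Mopt}+2r\sizeof{E'}$, and summing yields $\sizeof{\Malg}\le(1+4r f(r))\sizeof{\Mopt}$.
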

In the remainder of this subsection we prove \Cref{thm:mdsapprox}.
Note that this means that if $r$ is constant, then the approximation factor is constant, too.

We now analyze the behavior of \Cref{alg:rmds} on a particular graph $G \in \mathcal{C}$.
We begin by showing that the optimal solution implies a partition into
Voronoi cells~\cite{DBLP:books/sp/PreparataS85} which we will use for the rest.
First we define what a \emph{covering} vertex is.
Note that this can (and often is) different from the vertex selected by the algorithm.

\begin{definition}
  Let $c: V \to \Mopt$ be the mapping
  from vertices in $V$ to corresponding dominating vertices in $\Mopt$, breaking ties by distance (smaller), then by ID (smaller):
  \begin{equation}
    c(v) \defeq \argmin_{u \in N^{r}[v] \cap \Mopt} \{ (d(u, v), u) \} = \argmin_{u \in \Mopt} \{ (d(u, v), u) \}
  \end{equation}
\end{definition}

Again we order tuples lexicographically.
Now we can partition $V$ into Voronoi cells $H_{m} \defeq \{ v \in V | c(v) = m \}$ for each $m \in \Mopt$.

\begin{corollary}
  \label{lem:optpartitiongood}
  Each $H_{m}$ is connected and has radius at most $r$.
\end{corollary}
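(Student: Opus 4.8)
The plan is to prove \Cref{lem:optpartitiongood} by showing that every vertex $v \in H_m$ is connected to $m$ by a shortest path that lies entirely within $H_m$, which simultaneously yields connectivity and the radius bound (since $d(m,v) \le r$ for every $v \in H_m$ by the definition of $c$). First I would fix $m \in \Mopt$ and $v \in H_m$, and take a shortest path $P = (m = p_0, p_1, \dots, p_k = v)$ in $G$, where $k = d(m,v) \le r$. The key claim is that $c(p_i) = m$ for every $i$, i.e.\ $p_i \in H_m$.

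The main step is proving this claim. I would argue by a ``sub-path of a shortest path is shortest'' observation: since $P$ is a shortest $m$--$v$ path, the prefix $(p_0,\dots,p_i)$ is a shortest $m$--$p_i$ path, so $d(m, p_i) = i$. In particular $m \in N^r[p_i]$, so $m$ is a candidate dominator for $p_i$ and $d(m,p_i) = i$. Now suppose for contradiction that $c(p_i) = m' \neq m$ for some $i$; pick the \emph{largest} such $i$ (note $i < k$ since $c(v) = m$, and $i > 0$ since $c(m) = m$ as $d(m,m)=0$). By the tie-breaking rule in the definition of $c$, either $d(m', p_i) < d(m, p_i) = i$, or $d(m', p_i) = i$ and $\mathrm{ID}(m') < \mathrm{ID}(m)$. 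I would like to derive a contradiction by ``pushing'' this closer-or-smaller dominator one step along the path to $p_{i+1}$: since $p_{i+1}$ is adjacent to $p_i$, we have $d(m', p_{i+1}) \le d(m', p_i) + 1 \le i+1 = d(m, p_{i+1})$, and $m' \in N^r[p_{i+1}]$ because $d(m',p_{i+1}) \le i+1 \le r$. So $m'$ is at least as good a candidate for $p_{i+1}$ as $m$ is — with equality in distance forcing the smaller ID to still win — which forces $c(p_{i+1}) \neq m$, contradicting the maximality of $i$ (as $c(p_{i+1}) = m$ was guaranteed for $i+1 \le k$, or directly $c(v)=m$ when $i+1=k$). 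Hence no such $i$ exists and the whole path lies in $H_m$.

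From the claim, connectivity of $H_m$ is immediate: every $v \in H_m$ is joined to $m$ by a path inside $H_m$, so $H_m$ is connected (via $m$). The radius bound follows because for every $v \in H_m$ we have $d(m,v) \le r$ by definition of $N^r[\cdot]$ in the domain of the $\argmin$ defining $c$; thus the center $m$ witnesses radius at most $r$ in the subgraph $G[H_m]$ — here I should be slightly careful that distance inside $G[H_m]$ equals distance in $G$, which is exactly what the in-cell shortest path gives us.

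The main obstacle I anticipate is handling the tie-breaking cleanly in the ``pushing'' step: when $d(m',p_i) = d(m,p_i)$ one must track that the ID comparison is preserved after extending by one edge, and when $d(m', p_i) < d(m,p_i)$ one must check that after adding $1$ we still get $\le$ rather than accidentally exceeding $i+1$. Both are straightforward but need the triangle inequality applied in the right direction, plus the observation that $d(m,p_{i+1}) = i+1$ exactly (not just $\le i+1$), which again uses that $P$ is a shortest path. The high-girth hypothesis is not needed here; only the optimality of the path and the well-definedness of $c$ via total ordering on $(d(\cdot,v),\mathrm{ID}(\cdot))$ matter.
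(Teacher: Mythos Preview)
Your argument is correct. The shortest-path ``pushing'' step works exactly as you describe: choosing the largest offending index $i$ and using the triangle inequality together with the lexicographic tie-break forces $c(p_{i+1}) \neq m$, contradicting maximality. This simultaneously gives connectivity of $H_m$ and the fact that distances inside $G[H_m]$ coincide with distances in $G$, so the radius bound holds in the induced subgraph and not merely in $G$.

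Compared with the paper: the paper's proof is a single sentence that only records the radius bound (``$m$ dominates all of $H_m$, hence radius $\le r$'') and leaves connectivity entirely implicit, presumably as a folklore fact about Voronoi partitions. Your approach actually supplies the missing connectivity argument and is careful about the induced-subgraph subtlety that the paper glosses over. In short, you take the same underlying idea (Voronoi cells around an optimal dominating set) but give a genuinely complete proof where the paper offers only an assertion; nothing is lost and rigor is gained.
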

\begin{proof}
  As vertex $m$ dominates all vertices in $H_{m}$, we know that $H_{m}$ has radius at most $r$.
\end{proof}

Next, we use the high-girth property to show that the Voronoi cells behave nicely:

\begin{lemma}
    \label{lem:clusteristree}
    The subgraph induced by $H_{m}$ in $G$ is a tree.
\end{lemma}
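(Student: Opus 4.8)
The plan is to show the induced subgraph $G[H_m]$ is connected and acyclic; combined with connectivity this gives a tree. Acyclicity would follow immediately from the girth bound $g \geq \mingirth = 4r+3$ if we knew $G[H_m]$ had small diameter, but the subtlety is that the induced subgraph on a Voronoi cell can in principle have a larger diameter than the cell's radius in $G$, since shortest paths between two cell members might leave the cell. So the real work is to bound the diameter of $G[H_m]$, and in particular to first establish that $G[H_m]$ is connected.

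First I would prove connectivity: for any $v \in H_m$, I claim a shortest path $P$ from $v$ to $m$ in $G$ lies entirely inside $H_m$. Take any vertex $w$ on $P$; then $d(w,m) \le d(v,m) \le r$, so $m \in N^r[w] \cap \Mopt$, and I need to argue $c(w) = m$. Here I would use the tie-breaking rule: if $c(w) = m' \neq m$ with $d(w,m') \le d(w,m)$, then $d(v,m') \le d(v,w) + d(w,m') \le d(v,w) + d(w,m) = d(v,m)$ — but this only gives $\le$, not $<$, so I cannot immediately contradict $c(v)=m$. The fix is to choose $P$ more carefully, or to argue along the whole path: actually the cleanest route is to observe that the set $\{w : d(w,m) = d(w,\Mopt)\}$ together with the ID tie-break is "shortest-path closed" toward $m$ when we pick, for each $v$, the specific geodesic to $m$ that the distance-then-ID argument prefers. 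I expect this bookkeeping with the lexicographic tie-breaker to be the main obstacle; one must show that membership in $H_m$ propagates along a suitably chosen geodesic to $m$, so every vertex of $H_m$ is connected to $m$ within $G[H_m]$, and $G[H_m]$ has radius at most $r$ as a subgraph (not just in $G$).

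Once $G[H_m]$ is connected with radius $\le r$ inside itself, its diameter is at most $2r$, so any cycle in $G[H_m]$ would have length at most... no — a cycle could be long even in a small-diameter graph. So instead I would argue acyclicity directly: suppose $C$ is a cycle in $G[H_m]$. Every vertex of $C$ is within distance $r$ of $m$ in $G$ (indeed within $G[H_m]$). Pick two vertices $a,b$ on $C$ that are "antipodal" along $C$; the two arcs of $C$ between them, together with geodesics from $a$ and $b$ to $m$, let me build a closed walk. More carefully: take any edge $e=\{x,y\}$ of $C$; then $C - e$ is a path of length at least $2$ from $x$ to $y$ inside $G[H_m] \subseteq G$, and $x,y$ are also joined by the path $x \to m \to y$ of length $\le 2r$ in $G$. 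If these two $x$–$y$ paths are internally disjoint we get a cycle of length $\le 2r + (\text{length of } C - e)$; since $C$ has length $\ge 3$, we need a bound on $|C|$. The bound $|C| \le 2r$ comes from: every vertex of $C$ is within $r$ of $m$, so $C \subseteq N^r[m]$, and the shortest cycle through any vertex of $N^r[m]$ — hmm, this still needs the girth. The honest statement: any cycle $C$ in $G$ with $V(C) \subseteq N^r[m]$ has length $\le 2r+2$ (go around via $m$), contradicting $g \ge 4r+3 > 2r+2$. I would assemble this: a cycle in $G[H_m]$ gives, by routing through $m$, a shorter cycle of length at most $2r+2 < g$, a contradiction. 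I expect the fiddly point to be ensuring the routed-through-$m$ object is a genuine cycle (internal disjointness) rather than merely a closed walk, which is handled by taking a shortest such configuration or by extracting a cycle from the closed walk and checking its length is still $< g$.
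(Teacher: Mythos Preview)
Your approach is essentially the paper's, but the paper executes it much more crisply: it takes a BFS-tree of $H_m$ rooted at $m$ (depth $\le r$), observes that any cycle in $G[H_m]$ must contain a non-tree edge $\{u,v\}$, and then the fundamental cycle (the tree path $u\to m\to v$ plus the edge $\{u,v\}$) has length at most $2r+1 < 4r+3$, a contradiction. Your meander through ``any edge $e$ of $C$'' is where you lose the thread: if $e$ happens to lie on the shortest-path tree to $m$, the routed-through-$m$ walk can degenerate. Choosing a \emph{non-tree} edge up front (which exists because a tree is acyclic) sidesteps all your worries about extracting a genuine cycle from a closed walk.

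On connectivity: your instinct that this needs care is right, and in fact the paper essentially asserts it in the preceding corollary without proof. But your own case analysis already works once completed. If $w$ lies on a shortest $v$--$m$ path and $c(w)=m'\neq m$, then either $d(w,m')<d(w,m)$, giving $d(v,m')<d(v,m)$, or $d(w,m')=d(w,m)$ with $m'<m$ in ID, giving $d(v,m')\le d(v,m)$ and hence $(d(v,m'),m')<(d(v,m),m)$ lexicographically; either way $c(v)\neq m$, a contradiction. So every shortest $v$--$m$ path stays in $H_m$, $G[H_m]$ is connected, and the BFS-tree has depth at most $r$ as needed.
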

\begin{proof}
    Assume towards contradiction that there is a cycle $C'$ in $H_{m}$.
    We will construct a cycle that has length at most $2r + 1$.

    Construct a BFS-tree of $H_{m}$ rooted in $m$.  Then the cycle $C'$ must contain an edge $e$ between $u, v \in H_{m}$.
    Consider the cycle that consists of the path from $u$ to $v$ along the BFS-tree, and the edge ${u, v}$.
    This cycle must have length at most $r + r + 1$, because the BFS-tree has depth at most $r$.
    This contradicts $G$ having a girth of at least $\mingirth$.
\end{proof}

\begin{lemma}
  \label{lem:clusterleqoneedge}
  For any two Voronoi cells $H_{m} \neq H_{n}$, there is at most one edge between them.
\end{lemma}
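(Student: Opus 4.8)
The plan is to argue by contradiction: suppose there were two distinct edges $e_1 = \{u_1, w_1\}$ and $e_2 = \{u_2, w_2\}$ between $H_m$ and $H_n$, with $u_1, u_2 \in H_m$ and $w_1, w_2 \in H_n$. I want to exhibit a short cycle in $G$, contradicting the girth bound of $\mingirth = 4r+3$. The natural cycle to build is: go from $u_1$ to $u_2$ inside $H_m$, cross $e_2$ to $w_2$, go from $w_2$ to $w_1$ inside $H_n$, then cross $e_1$ back to $u_1$.

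The length bound follows from the radius bounds: by \Cref{lem:optpartitiongood} each cell has radius at most $r$, so there is a path from $u_1$ to $u_2$ through $m$ of length at most $2r$ (and similarly from $w_1$ to $w_2$ through $n$ of length at most $2r$). Adding the two crossing edges gives a closed walk of length at most $2r + 1 + 2r + 1 = 4r + 2 < 4r + 3$. The remaining work is to extract an actual cycle of length at most $4r+2$ from this closed walk — a closed walk of length $\ell$ that is not a single vertex always contains a cycle of length at most $\ell$, so as long as the walk is nontrivial we get a cycle contradicting the girth assumption. I would either invoke this standard fact or, slightly more carefully, use \Cref{lem:clusteristree}: since $H_m$ and $H_m$ are trees, the $u_1$–$u_2$ path inside $H_m$ is the unique simple path, and likewise for $w_1$–$w_2$, so the only way the closed walk degenerates is if $u_1 = u_2$ or $w_1 = w_2$; one then checks that even in those degenerate cases the closed walk still contains a genuine cycle (it uses two distinct edges $e_1 \neq e_2$ and at least one nontrivial path segment, so it cannot collapse entirely).

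The one subtlety to handle cleanly is ruling out the fully degenerate case $u_1 = u_2$ \emph{and} $w_1 = w_2$ simultaneously, which would force $e_1 = e_2$ against our assumption — so that case simply cannot occur, and in all remaining cases the closed walk genuinely traverses a cycle. So the main obstacle is not any deep argument but just this bookkeeping: making sure the closed walk I describe really contains a simple cycle (rather than back-tracking on itself along a single edge) before invoking the girth bound. Everything else is a direct length count against $4r+3$.
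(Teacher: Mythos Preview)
Your proposal is correct and is essentially the same argument as the paper's: assume two distinct crossing edges, connect their endpoints by paths of length at most $2r$ inside each cell (using \Cref{lem:optpartitiongood}), and obtain a cycle of length at most $4r+2$, contradicting the girth bound. The paper disposes of the degenerate case slightly more tersely by assuming w.l.o.g.\ that the endpoints on one side are distinct, but your more explicit bookkeeping via \Cref{lem:clusteristree} is equally valid.
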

\begin{proof}
  Let $\{u,v\} \in E$ and $\{s,t\} \in E$ be two different edges between $H_{m}$ and $H_{n}$.
  W.l.o.g.~assume $c(u) = c(s) = m$ and $c(v) = c(t) = n$, and assume $v \neq t$ (but $u = s$ is possible).

  By \Cref{lem:optpartitiongood} we know that the subgraphs induced by $H_{m}$ and $H_{n}$ are each connected,
  so there must be a path $p_{m}$ entirely in $H_{m}$ between vertices $u$ and $s$, possibly of length 0.
  Likewise, a path $p_{n}$ must exist entirely in $H_{n}$ between vertices $v$ and $t$.
  The union of the paths and the assumed edges forms a cycle $C_{u,v,s,t}$,
  as no vertex can be repeated.
  We will now prove that $C_{u,v,s,t}$ is too small.

  The paths $p_{m}$ and $p_{n}$ have length at most $2r$, each.
  Therefore we have found the cycle $C_{u,v,s,t}$ to have length at most $4r+2$,
  in contradiction to the minimum girth $\mingirth$.
\end{proof}

Let $G' = (V', E')$ be the result of contracting $H_{m}$ to a single vertex, for each $m \in \Mopt$.

\begin{lemma}
  \label{lem:fewinterclusteredges}
  The edge set $E'$ is small: $\sizeof{E'} \leq f(r) \cdot \sizeof{M}$
\end{lemma}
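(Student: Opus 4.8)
The plan is to observe that $G'$, the graph obtained by contracting each Voronoi cell $H_m$, is a $1$-shallow minor of $G$, and then invoke the bounded-expansion bound $\nabla_1(G) \le f(r)$. Indeed, by \Cref{lem:optpartitiongood} every $H_m$ is connected and has radius at most $r$; but more is true, since each $H_m$ is dominated by $m$, it has radius at most $r$ — however a $1$-shallow minor only permits contracting sets of radius at most $1$, so I must be slightly more careful. The cleaner route is to note that $G'$ is a minor of $G$ (contracting connected subgraphs is a legitimate minor operation regardless of radius), but to get the edge-density bound from the \emph{expansion} function I want a \emph{shallow} minor. To reconcile this, I would instead pass through $N^r[v]$ being a tree (girth at least $\mingirth \ge 2r+1$), so each $H_m$, being a subtree, has a BFS tree of depth at most $r$ rooted at $m$; contracting a radius-$r$ connected set makes $G'$ an $r$-shallow minor of $G$, hence $\sizeof{E'}/\sizeof{V'} \le \nabla_r(G) \le f(r)$. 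Since $\sizeof{V'} = \sizeof{\Mopt} = \sizeof{M}$ (one contracted vertex per $m \in \Mopt$), we get $\sizeof{E'} \le f(r)\cdot\sizeof{M}$, which is exactly the claim.

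The key steps, in order, are: (1) recall from \Cref{lem:optpartitiongood} that each cell $H_m$ is connected of radius at most $r$; (2) check that contracting these vertex-disjoint connected radius-$\le r$ subgraphs, one per vertex of $\Mopt$, exhibits $G'$ as an $r$-shallow minor of $G$ (here we just take $S = G$ itself and partition $V$ into the $H_m$); (3) note the vertex count $\sizeof{V'} = \sizeof{\Mopt}$; (4) conclude $\sizeof{E'} \le \nabla_r(G)\cdot \sizeof{V'} \le f(r)\cdot\sizeof{M}$ by the definition of bounded expansion with expansion function $f$. One subtlety worth stating explicitly: after contraction, multiple parallel edges between two cells collapse to one (or \Cref{lem:clusterleqoneedge} already guarantees at most one edge between distinct cells in $G$), and self-loops from intra-cell edges are discarded, so $G'$ is genuinely simple and the density bound $\nabla_r$ applies to it directly.

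I do not expect a real obstacle here; the only thing requiring care is the bookkeeping in step (2) — matching the definition of $r$-shallow minor as stated in the preliminaries (take a subgraph $S$, partition into connected pieces of radius at most $r$, contract). Since the Voronoi cells already form a partition of all of $V$ into connected pieces of radius at most $r$, we take $S = G$ and the partition $\{H_m\}_{m \in \Mopt}$ verbatim, so $G'$ is by definition an $r$-shallow minor and no further argument is needed. The high-girth hypotheses (\Cref{lem:clusteristree}, \Cref{lem:clusterleqoneedge}) are not strictly needed for this particular lemma — they only sharpen the structure — but \Cref{lem:clusterleqoneedge} is convenient for asserting simplicity of $G'$ without invoking the "discard parallel edges" convention. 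The whole proof is then two or three lines.
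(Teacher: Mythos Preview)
Your proposal is correct and follows exactly the paper's approach: use \Cref{lem:optpartitiongood} to see that the Voronoi cells form a partition into connected radius-$\le r$ pieces, so $G'$ is an $r$-shallow minor of $G$, whence $\sizeof{E'}/\sizeof{V'}\le f(r)$ and $\sizeof{V'}=\sizeof{M}$ give the bound. The paper's own proof is a one-liner that invokes \Cref{lem:optpartitiongood} and the definition of $f(r)$ without spelling out the intermediate steps you (correctly) elaborate.
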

\begin{proof}
  Using \Cref{lem:optpartitiongood}, we can apply the definition of the function $f(r)$:
  \begin{equation*}
    \sizeof{E'} = \frac{\sizeof{E'}}{\sizeof{V'}} \cdot \sizeof{V'} \leq f(r) \cdot \sizeof{M}
    \qedhere
  \end{equation*}
\end{proof}

Now we can take a closer look at the set of vertices $D$ actually selected by the algorithm.
We will construct two sets of bounded size such that their union covers $D$, and thereby bound the size of $D$.

\begin{definition}
  \label{def:didosets}
  We consider the set $D^{O}$ of vertices $v$ that were selected by a vertex $u$ in a different Voronoi cell (i.e.~$c(v) \neq c(u)$),
  and the possibly overlapping set $D^{I}$ of vertices $v$ that were selected by a vertex $u$ in the same Voronoi cell (i.e.~$c(v) = c(u)$):
  \begin{align*}
    D^{O} &\defeq \{ d \in \Malg \mid \exists v .\ v \text{ selected $d$ with $c(v) \neq c(d)$} \} \\
    D^{I} &\defeq \{ d \in \Malg \mid \exists v .\ v \text{ selected $d$ with $c(v) = c(d)$} \}
  \end{align*}
\end{definition}

Note that $D=D^{I} \cup D^{O}$, and $\sizeof{D} \leq \sizeof{D^{I}} + \sizeof{D^{O}}$. In order to prove \Cref{thm:mdsapprox}
it is therefore sufficient to show $D^{I}, D^{O} \in O(r \cdot f(r) \cdot \sizeof{M})$.

First we will consider $D^{O}$, the set of vertices selected across Voronoi cells.
There are only few crossing edges, so there can only be few such selections:

\begin{lemma}
  The set $D^{O}$ of vertices selected across Voronoi cells is small: $D^{O} \in O(r \cdot f(r) \cdot \sizeof{\Mopt})$.
\end{lemma}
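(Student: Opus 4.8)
The plan is to bound $\sizeof{D^{O}}$ by the number of inter-cluster edges in the contracted graph $G'$, which is already controlled by \Cref{lem:fewinterclusteredges}. The key observation is this: if $d \in D^{O}$, then there is some vertex $v$ with $c(v) \neq c(d)$ that selected $d$, meaning $d \in N^{r}[v]$ but $d$ lies in a different Voronoi cell than $v$. So the path of length at most $r$ witnessing $d \in N^r[v]$ must cross from cell $H_{c(v)}$ to cell $H_{c(d)}$, hence it uses at least one inter-cluster edge. More usefully, I would track the \emph{sequence of cells} this path passes through. Because each ordered pair of cells shares at most one edge (\Cref{lem:clusterleqoneedge}), the path of length $\le r$ touches at most $r+1$ cells and in particular crosses at most $r$ distinct cell-boundary edges.

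The main step is a charging argument. First I would fix, for each $d \in D^{O}$, one witness vertex $v = v(d)$ and one shortest path $P_d$ from $v$ to $d$; this path leaves $H_{c(v)}$ at some first crossing edge $e_d \in E'$ (an edge of $G'$). I want to charge $d$ to the pair $(e_d, d)$ or, better, to argue that not too many $d$'s can share the same crossing edge. Here is where I would use that $d$ lives inside its own cell $H_{c(d)}$, which is a tree of radius $\le r$ (\Cref{lem:clusteristree}, \Cref{lem:optpartitiongood}): once the path $P_d$ has entered $H_{c(d)}$ through the unique edge connecting $H_{c(d)}$ to whatever cell precedes it, the remaining portion of $P_d$ inside $H_{c(d)}$ has length $\le r$, so $d$ lies within distance $r$ of the single entry vertex of $H_{c(d)}$ from that neighboring cell. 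Thus, for a fixed ordered pair of adjacent cells, the vertices $d$ entering $H_{c(d)}$ through that boundary and ending within $r$ steps number at most $\sizeof{N^r[x]} \cap H_{c(d)}$ for the entry vertex $x$ — but that could still be large, so I need to be more careful.

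The cleaner route, and the one I expect the paper takes: charge each $d \in D^{O}$ to the entire path of cells its witness path traverses, and observe that $d$ is determined (up to $r+1$ choices — its position along the path) by the sequence of crossing edges used, together with the starting cell. Since each crossing edge is one of the $\le f(r)\sizeof{\Mopt}$ edges of $G'$, and the path uses between $1$ and $r$ crossing edges, a crude bound would be exponential in $r$; to avoid that I instead charge $d$ to just the \emph{last} crossing edge $e_d^{\mathrm{last}}$ on $P_d$ — the unique edge of $G'$ by which $P_d$ enters $H_{c(d)}$ — together with the distance $j \le r$ from that entry point to $d$ inside the tree $H_{c(d)}$. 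Because $H_{c(d)}$ is a tree and the entry vertex is fixed by $e_d^{\mathrm{last}}$, at most one vertex of $H_{c(d)}$ sits at each distance $j$ along the path $P_d$'s continuation — no: a tree has many vertices at distance $j$. The actual point must be that $d$ itself is the selected vertex, so $d = \varid{v}$, and $\varid{v}$ is \emph{uniquely} determined by $v$ (\Cref{lem:selectneighbor}); hence distinct $d$'s arising from the same witness are impossible, and the count of $D^O$ is at most the number of (witness-cell, entry-edge, offset) triples, which is $\le \sizeof{E'} \cdot r \le r \cdot f(r) \cdot \sizeof{\Mopt}$, after noting every relevant $d$ can be assigned such a triple with the entry edge ranging over $E'$ and the offset over $\{1,\dots,r\}$.

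The main obstacle is pinning down exactly which combinatorial quantity $\sizeof{D^O}$ injects into: I must show each $d \in D^O$ can be assigned a pair (or triple) from a set of size $O(r \cdot f(r) \sizeof{\Mopt})$ in an injective way. The high-girth lemmas (\Cref{lem:clusteristree}, \Cref{lem:clusterleqoneedge}) are what make such an injection possible — without them a single boundary edge could be "re-entered" many times or a cell could have internal cycles letting many selected vertices hide behind one crossing edge. So concretely: I would (i) fix witness $v(d)$ and shortest path $P_d$; (ii) identify the unique last $G'$-edge $e_d$ of $P_d$ entering $H_{c(d)}$, well-defined since $c(v(d)) \neq c(d)$; (iii) record the distance from the $H_{c(d)}$-endpoint of $e_d$ to $d$ along $P_d$, which is some $j_d \in \{0,1,\dots,r\}$; (iv) argue the map $d \mapsto (e_d, j_d)$ is injective, using that within the tree $H_{c(d)}$ there is a unique path from the fixed entry vertex of length $j_d$ reaching $d$ and that two distinct selected vertices cannot coincide; (v) conclude $\sizeof{D^O} \le (r+1)\sizeof{E'} \le (r+1) f(r)\sizeof{\Mopt} \in O(r \cdot f(r)\sizeof{\Mopt})$. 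I would double-check step (iv) carefully, as that is the delicate point where the tree structure and the uniqueness of selection both get used.
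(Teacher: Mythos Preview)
Your step (iv) is where the argument breaks. The map $d \mapsto (e_d, j_d)$ is \emph{not} injective: knowing the entry edge $e_d$ fixes the entry vertex $x \in H_{c(d)}$, but in a tree there can be many vertices at distance $j_d$ from $x$, not just one. You noticed this yourself midway through (``no: a tree has many vertices at distance $j$''), and the attempted repair---uniqueness of $\varid{v}$ given $v$---does not help, because different $d, d' \in D^{O}$ have different witnesses $v(d), v(d')$, and nothing prevents two such witnesses from producing paths that enter $H_m$ through the same boundary edge and travel the same distance $j$ to two distinct selected vertices. The phrase ``there is a unique path from the fixed entry vertex of length $j_d$ reaching $d$'' is true but proves the wrong thing: it says the path from $x$ to $d$ is unique, not that $d$ is determined by $(x, j_d)$.

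The paper's proof sidesteps this entirely with an \emph{algorithmic} counting argument rather than a structural injection. During the $r$ propagation rounds of \Cref{alg:rmds}, each edge carries exactly one vertex ID per round per direction, so across any single edge at most $2r$ distinct IDs are ever announced. If $d \in D^{O}$, then some $v$ in a different cell ended with $\varid{v} = d$, which forces the ID of $d$ to have been sent across at least one inter-cell edge during propagation. Hence $D^{O}$ is contained in the set of all IDs that ever cross an inter-cell edge, and that set has size at most $2r \cdot \sizeof{E'} \le 2r \cdot f(r) \cdot \sizeof{\Mopt}$ by \Cref{lem:fewinterclusteredges}. No injection on paths or offsets is needed; the bound comes directly from the message complexity of the algorithm across the $\sizeof{E'}$ boundary edges.
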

\begin{proof}
  \Cref{lem:fewinterclusteredges} and \Cref{lem:clusterleqoneedge} tell us
  that there are at most $f(r) \cdot \sizeof{\Mopt}$ edges across Voronoi cells.
  For each such edge at most $2r$ different vertices are announced (i.e.~$r$ in each direction),
  therefore there are at most $f(r) \sizeof{\Mopt} \cdot 2r$ many candidates for $D^{O}$.
\end{proof}

Next we prove a bound on the other set of \Cref{def:didosets}:
the set $D^{I}$ of vertices selected from within a Voronoi cell.
We will see that these always fall on the spanning tree inside Voronoi cells, which are small.
First we define the candidate set:

\begin{definition}
  For each Voronoi cell $H_{m}$, we define the set of vertices $\outtree_{m}$ as the union of
  all shortest paths $P_{u,m}$ between vertex $m$ and each vertex $u$ on the Voronoi boundary:%
  \begin{align*}
    \outtree_{m} &\defeq \bigcup_{\substack{\{u,v\} \in E \\ \mathclap{c(u)=m, c(v) \neq m}}} P_{u,m} &
    \allouttrees &\defeq \bigcup_{m \in M} \outtree_{m}
  \end{align*}
\end{definition}

This is well-defined due to \Cref{lem:clusteristree}.
Observe that $\outtree_{m}$ is not necessarily equal to $H_{m}$:
All leaves in $\outtree_{m}$ have edges in $G$ that lead outside the Voronoi cell.
If that is not true for a vertex in $H_{m}$, then it will not be in $\outtree_{m}$.

Next we want to prove in two steps that $D^{I}$ is contained in $\allouttrees$.

\begin{lemma}
  \label{lem:existsvertexbeyondreach}
  If $\sizeof{\Mopt} \geq 1$, then there is always a vertex beyond reach:
  \begin{equation}
    \forall v \in V\ \exists u \in V .\ d(u, v) = r + 1
  \end{equation}
\end{lemma}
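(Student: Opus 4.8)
The plan is to reduce the statement to the claim that no single vertex can \distr dominate all of $G$, and then use connectivity to upgrade ``there is a vertex at distance more than $r$'' to ``there is a vertex at distance exactly $r+1$''. Concretely, fix $v \in V$. If we can exhibit any $u' \in V$ with $d(u',v) \geq r+1$, then a shortest $v$--$u'$ path $v = p_0, p_1, \dots, p_k = u'$ has $k \geq r+1$, and the vertex $p_{r+1}$ satisfies $d(p_{r+1},v) = r+1$, as required. Hence it suffices to prove that $N^{r}[v] \neq V$ for every $v \in V$, i.e.\ that $G$ has radius strictly larger than $r$.

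Suppose, for contradiction, that some vertex $w$ has $N^{r}[w] = V$. Then $\{w\}$ is itself a \distr, so $\sizeof{\Mopt} = 1$ and $G$ has radius at most $r$. This is precisely the degenerate situation the hypothesis is meant to rule out, since the approximation analysis of \Cref{alg:rmds} is only of interest when $\sizeof{\Mopt} \geq 2$; under the standing assumption that the instance is non-trivial we already obtain the contradiction, and thus the claim. It remains to observe that for the graphs relevant to this paper the degenerate case cannot even arise once $G$ is not a forest: the high-girth hypothesis excludes it.

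Indeed, suppose $G$ contains a cycle and let $C$ be a shortest one, so $\sizeof{C} = g \geq \mingirth$. Choose $x,y \in C$ with $C$-distance $\lfloor g/2 \rfloor \geq 2r+1$. Since $C$ is a shortest cycle it is isometric in $G$ (a strictly shorter $x$--$y$ path, glued to the shorter $x$--$y$ arc of $C$, would contain a cycle of length below $g$), so $d(x,y) = \lfloor g/2 \rfloor \geq 2r+1$. If $N^{r}[w] = V$ held for some $w$, then $d(w,x) \leq r$ and $d(w,y) \leq r$, giving $d(x,y) \leq 2r$ by the triangle inequality, a contradiction. Hence $N^{r}[v] \neq V$ for every $v \in V$, and the shortest-path argument of the first paragraph (using that $G$ is connected) completes the proof. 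The main obstacle is the bookkeeping around the trivial case $\sizeof{\Mopt} = 1$: on a genuine tree of radius at most $r$ the statement is literally false, so it must be read in the non-degenerate regime that is the only one used later, and the girth argument is exactly what certifies that this regime is in force as soon as $G$ has a cycle.
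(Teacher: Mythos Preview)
Your core argument is the same as the paper's: both prove the contrapositive by observing that if no vertex lies at distance exactly $r+1$ from $v$, then (via the shortest-path trick) no vertex lies at distance greater than $r$ either, so $\{v\}$ is itself a \distr and hence $\sizeof{\Mopt}=1$. You correctly notice that this only contradicts $\sizeof{\Mopt}\geq 2$, not the literally stated hypothesis $\sizeof{\Mopt}\geq 1$; the paper's statement has a typo here, and its own proof really uses the premise $\sizeof{\Mopt}>1$ (which is the only regime where the approximation bound is non-trivial anyway).

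Your additional paragraph---using the girth bound to show that any graph in the class containing a cycle automatically has diameter at least $2r+1$ and hence $\sizeof{\Mopt}\geq 2$---is correct and goes beyond what the paper does. It is a clean way to certify that the non-degenerate hypothesis is in force whenever $G$ is not a forest, and it makes explicit that the only genuine exception is a tree of radius at most $r$. The paper simply relies on the (mis-stated) premise and does not supply this girth argument.
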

\begin{proof}
  Assume towards contradiction there is a vertex $v$ for which no such vertex $u$ exists.
  Then there is also no vertex $u'$ with $d(u',v) > r+1$,
  because one could pick a shortest path, and construct such a $u$.
  Therefore $D' = \{v\}$ is a dominating set, in contradiction to the premise.
\end{proof}

With this we can show that only vertices in $\outtree_{m}$ are selected:

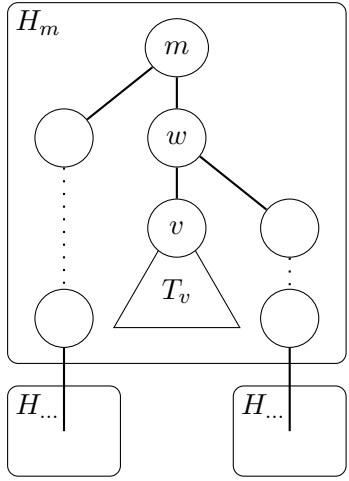
\begin{figure}
  \centering
  \begin{tikzpicture}[xscale=1.5,yscale=1.2]
    \tikzstyle{vertex}=[ellipse,draw,minimum width=22pt,minimum height=22pt]
    \tikzstyle{edge} = [draw,thick,-]

    \node[draw, minimum width=4.5cm, minimum height=4.8cm,rounded corners] (hm) at (0,-1.5) {};
    \node[inner sep=3pt,anchor=north west] (foobar) at (hm.north west) {$H_{m}$};
    \node[draw, minimum width=1.5cm, minimum height=1.2cm,rounded corners] (hl) at (-1,-4.25) {};
    \node[inner sep=3pt,anchor=north west] (foobar) at (hl.north west) {$H_{\ldots}$};
    \node[draw, minimum width=1.5cm, minimum height=1.2cm,rounded corners] (hr) at ( 1,-4.25) {};
    \node[inner sep=3pt,anchor=north west] (foobar) at (hr.north west) {$H_{\ldots}$};

    \node[regular polygon,regular polygon sides=3,draw,minimum height=55pt] at ( 0,-2.7) {$T_{v}$};
    \node[vertex] (m)   at ( 0, 0) {$m$};
    \node[vertex] (w)   at ( 0,-1) {$w$};
    \node[vertex,fill=white] (v)   at ( 0,-2) {$v$};
    \node[vertex] (alt) at (-1,-1) {};
    \node[vertex] (art) at ( 1,-2) {};
    \node[vertex] (alb) at (-1,-3) {};
    \node[vertex] (arb) at ( 1,-3) {};
    \path[edge] (alt) -- (m) -- (w) -- (art);
    \path[edge] (w) -- (v);
    \path[edge, loosely dotted] (alt) -- (alb);
    \path[edge, loosely dotted] (art) -- (arb);

    \path[edge] (alb) -- (hl.center);
    \path[edge] (arb) -- (hr.center);
\end{tikzpicture}
  \caption{Typical vertex layout in proof of \Cref{lem:selectouttree}.
    The identity of vertex $u$ does not matter, hence it is not shown}
  \label{fig:diisouttree}
\end{figure}

\begin{lemma}
  \label{lem:selectouttree}
  Let $v$ be a vertex selected by $u$ in the same Voronoi cell.
  Then $v \in \outtree_{c(v)}$:%
  \begin{equation}
    \big( \varid{u} = v \land c(u) = c(v) \big) \implies v \in \outtree_{c(v)}
  \end{equation}
\end{lemma}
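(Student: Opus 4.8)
Suppose for contradiction that $u$ selects $v$ (that is, $\varid{u}=v$), that $c(u)=c(v)=:m$, and yet $v\notin\outtree_m$. Root the tree $H_m$ (\Cref{lem:clusteristree}) at $m$. By definition $\outtree_m$ is the union, over the boundary vertices $u'$ of $H_m$ — those $u'\in H_m$ having a $G$-edge leaving $H_m$ — of the (unique) $m$-to-$u'$ path inside $H_m$, which is exactly the root-to-$u'$ path of the rooted tree; hence a node of $H_m$ lies in $\outtree_m$ iff its subtree contains a boundary vertex. So $v\notin\outtree_m$ and $v\neq m$ force: (i) the subtree $T_v$ of $v$ contains no boundary vertex of $H_m$; and therefore, since every edge of $G$ with both endpoints in $H_m$ is a tree edge (\Cref{lem:clusteristree}), (ii) the only $G$-edge leaving $T_v$ is the tree edge $\{v,w\}$, where $w$ is the parent of $v$. (If instead $v=m$ we are already done: $H_m$ then contains a boundary vertex, so $m\in\outtree_m$, unless $H_m=V$, i.e.\ $\sizeof{\Mopt}=1$, which I set aside as a degenerate case.)

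Now $v$ is the sole \emph{gateway} of $T_v$: every $G$-path between a vertex of $T_v$ and a vertex outside $T_v$ traverses $\{v,w\}$ and hence passes through $v$. From $c(v)=m$ we get $d(m,v)\le r$, so $d(m,v)\ge 1$ as $v\neq m$; and every $x\in T_v$ has $c(x)=m$, so $d(m,x)\le r$, while the $m$--$x$ path runs through $v$, giving $d(m,x)=d(m,v)+d(v,x)\ge 1+d(v,x)$, i.e.\ $d(v,x)\le r-1$. Consequently $N^{r}[v]\supseteq T_v$, and also $N^{r}[w]\supseteq T_v$ because $d(w,x)=1+d(v,x)\le r$ for $x\in T_v$; moreover, for $x\notin T_v$ the $v$--$x$ path runs through $w$, so $N^{r}[v]\setminus T_v=N^{r-1}[w]\setminus T_v\subseteq N^{r}[w]\setminus T_v$. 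Therefore $\sizeof{N^{r}[w]}-\sizeof{N^{r}[v]}$ equals the number of vertices outside $T_v$ that are at distance exactly $r$ from $w$.

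The crux is that this number is at least one. Applying \Cref{lem:existsvertexbeyondreach} to $w$ yields a vertex $z$ with $d(w,z)=r+1$, and $z\notin T_v$ since every vertex of $T_v$ is within distance $r$ of $w$. A shortest $w$--$z$ path never enters $T_v$: the only entry is the edge $\{w,v\}$ out of $w$ itself, and leaving $T_v$ again would revisit $w$. Hence its vertex at distance exactly $r$ from $w$ lies outside $T_v$, so $\sizeof{N^{r}[w]}\ge\sizeof{N^{r}[v]}+1$, and thus $\bigl(\sizeof{N^{r}(w)},w\bigr)>\bigl(\sizeof{N^{r}(v)},v\bigr)$ regardless of identifiers. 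Finally $w\in N^{r}[u]$: if $u\in T_v$ then $d(u,w)=d(u,v)+1\le (r-1)+1=r$; if $u\notin T_v$ then the $u$--$v$ path passes through $w$, so $d(u,w)=d(u,v)-1\le r-1$. Either way $N^{r}[u]$ contains $w$, whose tuple strictly exceeds that of $v$, so by \Cref{lem:selectneighbor} the vertex selected by $u$ cannot be $v$, contradicting $\varid{u}=v$.

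The step I expect to be fiddliest is the bookkeeping around the gateway vertex: establishing cleanly that $T_v$ attaches to the rest of $G$ only through the single edge $\{v,w\}$ (which needs both consequence (i) of $v\notin\outtree_m$ and the tree structure of $H_m$), and then exploiting this simultaneously to evaluate the two neighbourhood sizes and to confine the shortest $w$--$z$ path to the outside of $T_v$. One must also remember to dispose of the case $\sizeof{\Mopt}\le 1$ separately, since \Cref{lem:existsvertexbeyondreach} is unavailable there.
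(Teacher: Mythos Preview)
Your proof is correct and follows essentially the same route as the paper's: assume $v\notin\outtree_{m}$, pass to the parent $w$ of $v$ in the rooted tree $H_{m}$, show $N^{r}[v]\subsetneq N^{r}[w]$ using \Cref{lem:existsvertexbeyondreach}, and obtain a contradiction because $w\in N^{r}[u]$. You are more explicit than the paper about why $T_{v}$ attaches to the rest of $G$ solely through $\{v,w\}$, and you apply \Cref{lem:existsvertexbeyondreach} to $w$ rather than to $v$; both are inconsequential variations, and your explicit flagging of the degenerate case $\sizeof{\Mopt}\le 1$ is appropriate.
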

\begin{proof}
  Assume towards contradiction that $v \notin \outtree_{c(v)}$.  For brevity, let $m \defeq c(v)$.
  Observe that $m \neq v$, because $m \in \outtree_{m}$.
  Let $w$ be the next vertex on a shortest path from $v$ towards $m$; possibly $m$ itself.
  We will now analyze the properties of vertex $w$ and conclude that vertex $u$ should not have selected $v$.
  Refer to \Cref{fig:diisouttree} for an overview.

  By \Cref{lem:clusteristree}, the subgraph induced by $H_{m}$ is a tree.
  If we root this $H_{m}$-tree at vertex $m$, we can denote the subtree rooted at $v$ as $T_{v}$.
  This subtree has depth at most $r - 1$, so $w$ covers the entire subtree:
  $T_{v} \subseteq N^{r}(w)$.  All vertices $x \in V \setminus T_{v}$ are closer to $w$ than to $v$,
  as all paths from $x$ to $v$ must go through $w$.
  So the neighborhood of $v$ is included in the neighborhood of $w$:
  $N^{r}[v] \subseteq N^{r}[w]$.

  Now we can use \Cref{lem:existsvertexbeyondreach}:
  There must be a vertex $t$ that has distance $r+1$ to vertex $v$, so $t \notin N^{r}(v)$.
  This means that $t \notin T_{v}$, and therefore $t \in N^{r}(w)$.
  In summary, the degree of $w$ is strictly larger: $\sizeof{N^{r}(w)} > \sizeof{N^{r}(v)}$,
  meaning that vertex $u$ would prefer selecting $w$ over $v$.

  All that is left is to show that vertex $u$ is indeed able to select vertex $w$:
  If $u \in T_{v}$, then the maximum depth of $r-1$ means the distance to $w$ is at most $r$.
  If $u \notin T_{v}$, then $w$ is on every shortest path between $u$ and $v$,
  and therefore in reach, too.

  This leads to a contradiction: Vertex $u$ selected $v$,
  although vertex $w$ is in reach, and has a strictly larger distance-$r$ neighborhood,
  and should be preferred by the algorithm.
\end{proof}

\begin{corollary}
  \label{cor:diisallouttrees}
  Selections within a Voronoi cell are restricted to $\allouttrees$: $D^{I} \subseteq \allouttrees$
\end{corollary}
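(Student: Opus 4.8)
The plan is to derive the statement directly from \Cref{lem:selectouttree} together with the definitions of $D^{I}$ and $\allouttrees$. Recall that $D^{I}$ consists exactly of those selected vertices $d \in \Malg$ for which there exists a vertex $v$ with $\varid{v} = d$ and $c(v) = c(d)$. So fix an arbitrary $d \in D^{I}$ and let $v$ be such a witness vertex, i.e.~$v$ selected $d$ and $c(v) = c(d)$.

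Next I would apply \Cref{lem:selectouttree} with the roles of its variables instantiated as $u \defeq v$ (the selecting vertex) and $v \defeq d$ (the selected vertex). The hypotheses of that lemma, namely $\varid{u} = v$ (here: $\varid{v} = d$) and $c(u) = c(v)$ (here: $c(v) = c(d)$), are precisely what the membership $d \in D^{I}$ gives us. Hence the lemma's conclusion yields $d \in \outtree_{c(d)}$. Since $\outtree_{c(d)} \subseteq \allouttrees = \bigcup_{m \in \Mopt} \outtree_{m}$ by definition, we get $d \in \allouttrees$. As $d \in D^{I}$ was arbitrary, this proves $D^{I} \subseteq \allouttrees$.

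There is essentially no obstacle here: the corollary is a one-line unpacking of \Cref{lem:selectouttree}, which has already absorbed all the geometric content (the high-girth argument showing every within-cell selection lands on a shortest path to the dominator). The only point requiring mild care is the variable renaming when invoking \Cref{lem:selectouttree}, so that the ``selected'' vertex in the lemma matches the element $d$ of $D^{I}$ rather than its witness $v$; once that bookkeeping is set up correctly, the inclusion is immediate. I would keep the proof to two or three sentences of exactly this form.
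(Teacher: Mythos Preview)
Your proposal is correct and matches the paper's approach exactly: the paper states this corollary without proof immediately after \Cref{lem:selectouttree}, since it is precisely the straightforward unpacking you describe. Your careful handling of the variable renaming when invoking the lemma is the only nontrivial bookkeeping, and you have it right.
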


Now we can show a bound on $D^{I}$ by proving it on $\allouttrees$:

\begin{lemma}
  \label{lem:outtreesaresmall}
  The set $\allouttrees$ is small: $\sizeof{\allouttrees} \leq (1 + 2 r \cdot f(r)) \sizeof{M}$
\end{lemma}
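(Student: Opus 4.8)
The plan is to bound $\sizeof{\allouttrees}$ by relating it to the contracted graph $G'$ and using the edge bound from \Cref{lem:fewinterclusteredges}. Recall that $\allouttrees = \bigcup_{m \in M} \outtree_m$, and that each $\outtree_m$ is the union of shortest paths inside the tree $H_m$ (via \Cref{lem:clusteristree}) from the center $m$ to each boundary vertex of $H_m$, where a boundary vertex is an endpoint inside $H_m$ of some inter-cell edge.

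First I would observe that $\outtree_m$, being a union of root-to-vertex paths in the tree $H_m$, is itself a subtree of $H_m$ rooted at $m$, and its leaves are exactly (a subset of) the boundary vertices of $H_m$. If $H_m$ has $b_m$ boundary vertices, then $\outtree_m$ is a tree with at most $b_m$ leaves; since every internal vertex on each root-to-leaf path has depth at most $r$, the total number of vertices in such a tree is at most $1 + r \cdot b_m$ (the root $m$, plus at most $r$ further vertices on each of the $b_m$ paths — overlaps only help). Summing over all cells, $\sizeof{\allouttrees} \leq \sizeof{M} + r \sum_{m \in M} b_m$.

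Next I would bound $\sum_{m} b_m$. Each inter-cell edge $\{u,v\}$ with $c(u) = m \neq n = c(v)$ contributes its endpoint $u$ to the boundary of $H_m$ and its endpoint $v$ to the boundary of $H_n$; hence $\sum_m b_m \leq 2 \cdot (\text{number of inter-cell edges})$. By \Cref{lem:clusterleqoneedge} there is at most one edge between any two distinct cells, so the number of inter-cell edges equals $\sizeof{E'}$ in the contracted graph $G'$, and by \Cref{lem:fewinterclusteredges} we have $\sizeof{E'} \leq f(r) \cdot \sizeof{M}$. Putting this together gives $\sizeof{\allouttrees} \leq \sizeof{M} + r \cdot 2 f(r) \sizeof{M} = (1 + 2 r \cdot f(r)) \sizeof{M}$, as desired.

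The main subtlety — not really an obstacle, but the point requiring care — is the vertex-count estimate for $\outtree_m$: one must be careful to count the center $m$ only once across all the paths and to argue that each of the $b_m$ shortest paths contributes at most $r$ new vertices (its non-root vertices), so that union (rather than disjoint union) still obeys the bound $1 + r b_m$. One should also double-check the edge case $\sizeof{M} = 0$ (where $\allouttrees = \emptyset$ trivially), and note that the factor $2$ in $\sum_m b_m \leq 2\sizeof{E'}$ is exactly what produces the $2r f(r)$ term rather than $r f(r)$; since \Cref{lem:clusterleqoneedge} forbids multi-edges between cells, no further blow-up occurs.
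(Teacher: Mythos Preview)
Your proposal is correct and follows essentially the same approach as the paper: both count at most $r$ non-center vertices per shortest path $P_{u,m}$, associate at most two such paths to each inter-cell edge, and invoke \Cref{lem:fewinterclusteredges} to bound the number of inter-cell edges by $f(r)\sizeof{M}$. Your write-up is in fact more careful than the paper's (explicitly treating $\outtree_m$ as a subtree and noting that overlaps only help), but the underlying argument is identical.
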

\begin{proof}
  Consider an arbitrary but fixed $\{v, u\} \in E$ with $c(v) \neq c(u)$.
  Each path $P_{v,c(v)}$ has at most $r$ vertices not in $M$,
  because it is a shortest path, and by construction all vertices are dominated by $c(v)$.
  Each edge in $E'$ corresponds to at most two such paths.
  With \Cref{lem:fewinterclusteredges},
  this bounds the number of paths to at most $2 f(r) \sizeof{M}$.

  Therefore $\allouttrees$ contains at most $2 r \cdot f(r) \sizeof{M} + \sizeof{M}$ vertices.
\end{proof}

\begin{corollary}
  The set $D^{I}$ is small: $\sizeof{D^{I}} \leq (1 + 2 r \cdot f(r)) \sizeof{M}$
\end{corollary}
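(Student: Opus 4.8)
The plan is to combine the two immediately preceding results, which together bracket $D^{I}$ from above. First I would invoke \Cref{cor:diisallouttrees}, which states $D^{I} \subseteq \allouttrees$; this is exactly the payoff of \Cref{lem:selectouttree}, whose argument (vertex $w$ on the path toward $c(v)$ strictly dominates $v$ in distance-$r$ degree, hence would have been preferred) already did the real work of confining same-cell selections to the union of Voronoi spanning paths. So no new combinatorial reasoning is needed at this stage; the inclusion is available off the shelf.

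Then I would apply \Cref{lem:outtreesaresmall}, which bounds $\sizeof{\allouttrees} \leq (1 + 2 r \cdot f(r)) \sizeof{M}$. Chaining the two gives $\sizeof{D^{I}} \leq \sizeof{\allouttrees} \leq (1 + 2 r \cdot f(r)) \sizeof{M}$, which is precisely the claimed inequality. The proof is therefore a two-line deduction and carries no hidden steps: monotonicity of cardinality under set inclusion is all that is invoked beyond the cited lemmas.

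There is essentially no obstacle here — the content was front-loaded into \Cref{lem:selectouttree} (the structural claim that only $\outtree$-vertices get selected within a cell) and \Cref{lem:outtreesaresmall} (the counting bound on $\allouttrees$, which in turn rests on \Cref{lem:fewinterclusteredges} and the shortest-path length bound of $r$). The only thing worth double-checking is that the constant in \Cref{lem:outtreesaresmall} is transported verbatim, so that the corollary's stated bound matches; I would simply state $\sizeof{D^{I}} \leq \sizeof{\allouttrees} \leq (1 + 2 r \cdot f(r))\sizeof{M}$ and be done. Combined with the earlier bound $\sizeof{D^{O}} \in O(r \cdot f(r)\sizeof{M})$ and $\sizeof{D} \leq \sizeof{D^{I}} + \sizeof{D^{O}}$, this closes the proof of \Cref{thm:mdsapprox}.
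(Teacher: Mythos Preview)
Your proposal is correct and matches the paper's own proof exactly: the paper simply states that the corollary follows from \Cref{cor:diisallouttrees} and \Cref{lem:outtreesaresmall}, which is precisely the two-step chain $\sizeof{D^{I}} \leq \sizeof{\allouttrees} \leq (1 + 2 r \cdot f(r))\sizeof{M}$ you spell out.
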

\begin{proof}
  Follows from \Cref{cor:diisallouttrees} and \Cref{lem:outtreesaresmall}.
\end{proof}

As both $\sizeof{D^{I}}$ and $\sizeof{D^{O}}$ are in $O(r \cdot f(r) \cdot \sizeof{M})$,
this concludes the proof of \Cref{thm:mdsapprox}, and thus \Cref{thm:algorithm}.
More specifically, we proved the upper bound $\left( 1 + 4 \cdot r \cdot f(r) \right) \cdot \sizeof{M}$ on $\sizeof{D}$.

\subsection*{Tightness of Approximation}

The previous subsection proved that the algorithm is a $O(r \cdot f(r))$ approximation.
Is it possible that the algorithm actually performs significantly better what the analysis guarantees?
This subsection proves that there are graphs for which the algorithm yields a $\Omega(r \cdot f(r))$ approximation,
meaning that the above analysis of the algorithm is asymptotically tight.

\begin{lemma}
  \label{lem:rmdsalgoistight}
  There is a computable function $f\colon \N\rightarrow \N$
  and a class of graphs $\mathcal{C}$ of expansion $f(r)$ and girth at
  least $\mingirth$ such that \Cref{alg:rmds} takes $\Omega(r)$ rounds
  and provides an $\Omega(r\cdot f(r))$-approximation of minimum
  distance-$r$ dominating set on $\mathcal{C}$. (Cf.\ \Cref{thm:algorithm}.)
\end{lemma}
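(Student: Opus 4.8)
The plan is to build, for each $r$, a family of girth-$\ge\mingirth$ graphs on which \Cref{alg:rmds} is forced to select essentially all vertices while the optimum is smaller by a factor $\Omega(r\cdot f(r))$, thereby matching the $1+4\,r\,f(r)$ bound from the proof of \Cref{thm:mdsapprox} up to constants. The prototype is the cycle $C_L$ with identifiers increasing cyclically: there every vertex has distance-$r$ degree exactly $2r$, so the $\max$ in line~\ref{algline:maxtuples} is decided by the identifier, the vertex at position $p$ selects position $p+r$, hence $\Malg=V$ and $\sizeof{\Malg}/\sizeof{\Mopt}=L/\lceil L/(2r+1)\rceil=\Theta(r)$; this already proves the statement for $f\equiv 1$. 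To make the expansion factor appear as well, I would \emph{thicken} this example: take $w=\Theta(f(r))$ parallel long cycles (layers) and couple consecutive layers by small transversal connectors placed only every $\Theta(r)$ positions, the spacing being dictated by the girth requirement exactly as in the subdivided-graph constructions referenced in \Cref{fig:classes}. The design goals are: (a) the connectors are dense enough that from (almost) every vertex the radius-$r$ ball reaches $\Theta(w)$ of the layers, so $\sizeof{N^{r}(v)}=\Theta(r\cdot w)=\Theta(r\cdot f(r))$ and a dominating set placing $O(1)$ dominators per connector has size $O(n/(r\cdot f(r)))$; and (b) the graph remains of expansion $\Theta(f(r))$, which one checks by observing that contracting connected radius-$\le r$ pieces merges only boundedly many layers and a $\Theta(r)$-length stretch of each, so every shallow minor is again a bounded-width bundle and has density $O(f(r))$.

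Granting such a gadget, the proof proceeds in the obvious order: (i) verify the girth bound by the standard case distinction (a shortest cycle either stays within one layer, hence has length $L\ge\mingirth$, or uses $\ge 2$ connectors which are $\Theta(r)$ apart, hence has length $\ge\mingirth$); (ii) fix $f$ to be the chosen width (a fixed polynomial in $r$) and confirm $\nabla_{r'}\le f(r')$ for all $r'$ from goal (b); (iii) exhibit the explicit $O(n/(r\cdot f(r)))$-size \distr from goal (a), which upper bounds $\sizeof{\Mopt}$; (iv) assign identifiers increasing along the cyclic direction, with the layers interleaved, and argue as on the cycle that every vertex is the unique $(\sizeof{N^{r}(\cdot)},\mathrm{id})$-maximum of the radius-$r$ ball of some vertex $\Theta(r)$ steps ``behind'' it, so by \Cref{lem:selectneighbor} and \Cref{cor:selectediscomputedset} $\Malg=V$ (or at least $\sizeof{\Malg}=\Omega(n)$); (v) combine (iii) and (iv) to get $\sizeof{\Malg}/\sizeof{\Mopt}=\Omega(r\cdot f(r))$. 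The $\Omega(r)$-rounds claim needs no work: \Cref{alg:rmds} executes its $r$-round and $(r-1)$-round loops unconditionally on every input.

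The hard part is reconciling the two design goals (a) and (b) with the girth constraint. Girth $\ge\mingirth$ forces the transversal connectors to be $\Omega(r)$ apart, and if one naively uses a single bridging edge every $2r+1$ positions then the interiors of the blocks between connectors degenerate into $w$ vertex-disjoint paths, each of which needs its own dominator; this pushes $\sizeof{\Mopt}$ back up to $\Theta(n/r)$ and collapses the ratio to $\Theta(r)$, losing the $f(r)$ factor. Overcoming this is the crux: one needs connectors rich enough (for instance, staggered so that different layer-pairs are joined at interleaved positions, so that every vertex lies within distance $r$ of $\Theta(w)$ worth of connectors) while still creating no short cycle --- precisely the ``dense locally, sparse globally'' behaviour that separates bounded expansion from the minor-closed classes of \Cref{fig:classes}, so I expect this gadget to be somewhat delicate. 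A secondary technical point is the identifier-adversary argument in step (iv): on the bare cycle the distance-$r$ degree is literally constant, which makes it trivial, whereas on the thickened graph the degrees vary near connectors, so one must either engineer the construction to be genuinely distance-$r$-regular (e.g.\ padding non-connector vertices with small girth-respecting pendant gadgets so that all $\sizeof{N^{r}(\cdot)}$ agree, reducing the selection rule to the identifier tie-break exactly as on the cycle) or prove the ``selected by someone'' property directly, treating the connector vicinities --- only $O(n/r)$ vertices, safely discardable --- separately from the regular stretches.
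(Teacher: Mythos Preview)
Your proposal is a plan, not a proof: the entire weight rests on the existence of the ``thickened cycle'' gadget, and you explicitly leave its construction open (``granting such a gadget'', ``I expect this gadget to be somewhat delicate''). The difficulty you correctly identify---that girth $\ge\mingirth$ forces connectors $\Omega(r)$ apart, which tends to push $\sizeof{\Mopt}$ back up to $\Theta(n/r)$ and kill the $f(r)$ factor---is real, and your staggering idea is not fleshed out enough to tell whether it can be made to work. The secondary issue you flag, that near connectors the graph is not distance-$r$-regular so the adversarial-identifier trick from the bare cycle no longer applies cleanly, is also genuine and unresolved.

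The paper sidesteps both problems with a much simpler construction that does not rely on identifier assignment at all. It takes a \emph{subdivided biclique}: sets $X,Y$ of size $2f(r)$ each, every pair $(x,y)\in X\times Y$ joined by a path of $2r$ internal vertices (so $d(x,y)=2r+1$), and to the $x$-end of each such path a bundle $B_{x,y}$ of $2r\cdot f(r)$ pendant leaves is attached. Then $X\cup Y$ dominates, so $\sizeof{\Mopt}\le 4f(r)$; any shortest cycle must visit two vertices of $X$ and two of $Y$, giving girth $\ge 4(2r+1)>\mingirth$; contracting radius-$r$ balls around $X\cup Y$ yields $K_{2f(r),2f(r)}$, so $\nabla_r=f(r)$. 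The key point is that along each path the distance-$r$ degree is \emph{strictly monotone} toward $x$ (each step toward $x$ gains $\ge 2f(r)-1$ vertices from the other paths at $x$ and loses at most one toward $y$), so every vertex $v_y$ on the $y$-half of a path selects the vertex closest to $x$ within its reach, and since these reaches are all different, $r$ distinct vertices are selected per path. With $4f(r)^2$ paths this gives $\sizeof{\Malg}\ge 4r\,f(r)^2$ and hence ratio $\ge r\cdot f(r)$. No identifier adversary is needed because the asymmetry is built into the degrees themselves via the pendant bundles---precisely the mechanism you contemplated (``padding non-connector vertices with small girth-respecting pendant gadgets'') but did not carry out.
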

\begin{proof}
  The rest of this subsection constructs such a $\mathcal{C}$ consisting of graphs $G_{r,f(r)}$
  for all values of $r \geq 1$ and $f(r) \geq 2$.
\end{proof}

This does not mean that the problem is hard.
It only shows that in the worst case, the presented algorithm may
use up the approximation slack.

The construction is a modified version of the subdivided biclique.
Let $X$ and $Y$ be two disjoint sets of vertices, each of size $2 f(r)$.
For each pair $(x, y) \in X \times Y$, connect them with a path $P_{x,y}$ of $2r$ vertices, such that $d(x, y) = 2r + 1$.
This means that no vertex can simultaneously cover $x$ and $y$, i.e., is within distance $r$ of both $x$ and $y$.
For each $x \in X, y \in Y$, create a set $B_{x,y}$ of $k = 2r \cdot f(r)$ new vertices,
and connect each vertex in $B_{x,y}$ by a single edge to the vertex closest to $x$ of each path.
Let $V$ be the union of all these sets, and $E$ as described, then $G_{r,f(r)}=(V,E)$ is the constructed graph.

First we prove that the graph class satisfies all requirements.

\begin{lemma}
  For arbitrary but fixed values of $r \geq 1$ and $f(r) \geq 2$,
  the graph class $\mathcal{C}$ has expansion $f(r)$ and girth at
  least $\mingirth$.
\end{lemma}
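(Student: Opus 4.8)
The plan is to verify the two requirements — girth at least $\mingirth$ and expansion bounded by $f(r)$ — directly from the construction of $G_{r,f(r)}$.

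\textbf{Girth.} First I would locate all cycles in $G_{r,f(r)}$. The vertices of the form in $B_{x,y}$ are pendant (degree $1$), so they lie on no cycle; hence every cycle uses only the path vertices together with $X$ and $Y$. Any such cycle must alternate between entering and leaving the $X$-side and $Y$-side along the subdivided-biclique paths $P_{x,y}$: the shortest cycle uses two vertices of $X$ (or two of $Y$) joined by two internally disjoint paths through two distinct $Y$-vertices (resp.\ $X$-vertices). Each such path from an $X$-vertex to a $Y$-vertex has length exactly $2r+1$, so the smallest cycle has length $4$ such path-segments? No — more carefully: a cycle $x_1 \to y_1 \to x_2 \to y_2 \to x_1$ uses four path-segments each of length $2r+1$, giving length $4(2r+1) = 8r+4$. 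A shorter option is $x_1 \to y_1 \to x_2 \to y_1$?—not allowed, vertices repeat. So the girth is $8r+4 \geq 4r+3$ for all $r \geq 1$, and I would just record this inequality. (I would double-check whether the pendant vertices in $B_{x,y}$, all attached to the same path vertex, create multi-edges or $2$-cycles; since each is attached by a single edge and the graph is simple, they do not.)

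\textbf{Expansion.} Here I need $\nabla_r(G_{r,f(r)}) \le f(r)$, i.e.\ every $r$-shallow minor has edge density at most $f(r)$. I would take an arbitrary $r$-shallow minor $H$, obtained from a subgraph $S$ by contracting vertex-disjoint connected pieces $S_1,\dots,S_t$ of radius $\le r$. The key structural observation is that $|X| = |Y| = 2f(r)$ is a fixed constant (for fixed $r$), and the pendant sets $B_{x,y}$ contribute nothing to density since each is a set of leaves: a branch set containing a vertex of $B_{x,y}$ either is that isolated vertex (contributing a vertex but, being degree $\le 1$ in $S$, at most one incident edge) or is merged with its unique neighbor. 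So after contraction the only vertices that can have high degree are those absorbing many path-vertices; but a radius-$r$ ball in $G_{r,f(r)}$ can only touch a bounded number of the paths $P_{x,y}$ near their $X$-endpoints, and each $P_{x,y}$ has only $2r$ internal vertices. Counting edges of $H$ against vertices of $H$, the contribution of the biclique core is $O(f(r)^2)$ edges over $\Omega(f(r))$ vertices if enough branch sets survive, and I would show the ratio is bounded by $f(r)$. The cleanest route is probably: show $|E(H)| \le f(r)\cdot |V(H)|$ by charging each edge of $H$ either to a distinct pendant-vertex branch set (ratio $\le 1$) or to the $O(f(r)^2)$ edges among core/path pieces, whose number is dwarfed by the $|X|\cdot|Y| \cdot k$ pendant vertices available to pad $|V(H)|$ — so one must be careful that a shallow minor could \emph{delete} all pendant vertices. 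In that worst case $|V(H)| = O(f(r))$ and $|E(H)| = O(f(r)^2)$, and I would need the constant in the construction ($|X|=|Y|=2f(r)$, $k = 2r f(r)$) to make the ratio come out $\le f(r)$ exactly; this is the delicate bookkeeping.

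\textbf{Main obstacle.} The girth bound is routine. The real work is the expansion bound, and specifically the case of a shallow minor that throws away the pendant padding and keeps only a dense contraction of the core: one must check that contracting radius-$r$ pieces of a subdivided biclique $K_{2f(r),2f(r)}$ (with paths of length $2r+1$) cannot push the edge density above $f(r)$. I expect the argument to hinge on the fact that a radius-$r$ branch set can contain at most one endpoint from $X \cup Y$ worth of "hub" structure — or, if it contains vertices of several paths, those paths all share an $X$- or $Y$-endpoint — so the contracted graph is essentially again a (sub)graph of a small biclique, whose density is $O(f(r))$. Pinning down that the absolute constants in the construction are chosen so the bound is $\le f(r)$ and not merely $O(f(r))$ is the finicky part I would allocate the most care to.
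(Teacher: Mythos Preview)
Your girth argument is essentially the paper's: pendant vertices lie on no cycle, and any cycle in the subdivided biclique must use at least two $X$-vertices and two $Y$-vertices, hence at least four length-$(2r+1)$ segments, giving girth $4(2r+1) > 4r+3$.

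For the expansion bound the paper takes a much shorter route than you do. It simply exhibits one specific $r$-shallow minor --- contract a radius-$r$ ball around each vertex of $X \cup Y$ --- observes that this recovers $K_{2f(r),2f(r)}$ with $4f(r)$ vertices and $4f(r)^2$ edges, hence density exactly $f(r)$, and then asserts in one line that this contraction is optimal, so $\nabla_r(G_{r,f(r)}) = f(r)$. In other words, the paper treats the lower bound $\nabla_r \ge f(r)$ as the substantive computation and the upper bound as self-evident.

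Your plan attacks the upper bound head-on by bounding the density of an \emph{arbitrary} $r$-shallow minor, via the case split on pendant branch sets versus core/path branch sets. This is more work, but it is exactly the content hiding behind the paper's phrase ``this is the optimal contraction choice.'' Your key structural observation --- that a radius-$r$ branch set containing pieces of several paths must be centred near a single hub in $X \cup Y$, so the contracted graph is again (a subgraph of) the biclique $K_{2f(r),2f(r)}$ --- is precisely what makes that assertion true, and with it the exact constant $f(r)$ (not merely $O(f(r))$) falls out without the delicate bookkeeping you were bracing for. So your approach is sound and in fact more rigorous than the paper's; you are just doing explicitly what the paper leaves implicit.
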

\begin{proof}
Let $G_{r,f(r)} \in \mathcal{C}$ be a fixed graph from the constructed graph class.
The girth is at least $4 \cdot (2r + 1) > \mingirth$,
as a cycle needs to pass through at least two vertices from $X$ and two vertices from $Y$.
The low expansion can be shown by contracting as much as possible around
all vertices in $X \cup Y$, which results in the biclique $K_{2f(r),2f(r)}$,
with $4f(r)$ vertices and $4f(r)^{2}$ edges.
Therefore, the constructed graph has $f^{G}(r) \geq f(r)$.
As this is the optimal contraction choice, this also shows $f^{G}(r) = f(r)$.
\end{proof}

Next we show that this graph class causes worst-case behavior.
The running time is trivial:

\begin{lemma}
  \Cref{alg:rmds} runs in $\Omega(r)$ rounds on graphs in $\mathcal{C}$.
\end{lemma}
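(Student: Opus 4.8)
The plan is to observe that the round complexity of \Cref{alg:rmds} is completely oblivious to the input: the algorithm has no early-termination rule, so its number of communication rounds is dictated solely by its loop bounds, each of which is $\Theta(r)$. In particular, the selection loop (the one ending at line~\ref{algline:votingcomplete}) performs exactly $r$ iterations, and each iteration executes at least one full communication round --- a send to all neighbors followed by the corresponding receive. Hence, on every input graph the algorithm performs at least $r$ communication rounds, and in particular it does so on each $G_{r,f(r)} \in \mathcal{C}$, where the parameter driving the loop is precisely this value of $r$. This already yields the claimed $\Omega(r)$ bound.

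Combined with the earlier lemma stating that \Cref{alg:rmds} runs in $O(r)$ rounds, this pins the round complexity on $\mathcal{C}$ to $\Theta(r)$, which is the first of the two quantities that \Cref{lem:rmdsalgoistight} asserts are asymptotically matched on this graph class; the matching $\Omega(r\cdot f(r))$ approximation ratio is established separately and is not needed here.

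If one preferred a more ``intrinsic'' justification over simple loop counting, one could additionally remark that $G_{r,f(r)}$ genuinely has radius $\Omega(r)$: each path $P_{x,y}$ contributes $2r$ internal vertices, so $d(x,y)=2r+1$, and any algorithm that correctly outputs a \distr on these graphs must let information propagate across such paths, forcing $\Omega(r)$ rounds for \emph{any} correct algorithm. This strengthening is not needed for the statement as written, and I anticipate no real obstacle here: the content of this step is simply reading off the loop bounds of \Cref{alg:rmds}, so the proof is essentially immediate.
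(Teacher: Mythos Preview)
Your proposal is correct and matches the paper's approach: the paper's proof is the single line ``Follows from the construction of \Cref{alg:rmds},'' which is exactly your loop-counting observation spelled out. Your additional remark about the intrinsic $\Omega(r)$ radius of $G_{r,f(r)}$ is a nice strengthening but, as you note, unnecessary for the lemma as stated.
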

\begin{proof}
  Follows from the construction of \Cref{alg:rmds}.
\end{proof}

Next, we show that the algorithm computes a large dominating set, compared to the optimum:

\begin{lemma}
  The proposed \Cref{alg:rmds}
  provides an $\Omega(r\cdot f(r))$-approximation of minimum
  distance-$r$ dominating set on $\mathcal{C}$.
\end{lemma}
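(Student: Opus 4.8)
The plan is to analyze the constructed graph $G_{r,f(r)}$ by exhibiting a small distance-$r$ dominating set on the one hand, and showing the algorithm is forced to select many vertices on the other. First I would bound $\sizeof{\Mopt}$ from above. Observe that placing one dominating vertex on each path $P_{x,y}$ at its midpoint (or a vertex near it) covers that whole path, its two endpoints $x,y$, and — if we are slightly careful about which midpoint-side vertex we pick — the attached bundle $B_{x,y}$ as well. Actually it is cleaner to put the dominators \emph{in} or adjacent to the $B_{x,y}$-attachment vertices: there are $\sizeof{X}\cdot\sizeof{Y} = 4 f(r)^2$ paths, so $\sizeof{\Mopt} \le 4f(r)^2$, a quantity independent of $r$. (One should double-check that a single well-placed vertex per path reaches everything it needs to within distance $r$, using that the path has $2r$ internal vertices and the bundle hangs off the vertex closest to $x$; this is the only place where the precise parameters $2r$ and "closest to $x$" get used, and it is routine.)

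Next I would lower-bound $\sizeof{D}$, the set the algorithm outputs. The key structural point is that the bundle vertices in $B_{x,y}$ have enormous distance-$r$ degree: each such vertex sees, within distance $r$, essentially all the other $k-1$ bundle siblings (distance $2$ apart via the attachment vertex), the attachment vertex, a full length-$r$ stretch of the path toward $x$ together with the $x$-endpoint, and from there into \emph{every} path incident to that same $x$ and all their near-$x$ bundles. So a bundle-attached vertex near $x$ has distance-$r$ degree on the order of $f(r)\cdot k = f(r)\cdot 2r f(r)$, which dominates the degree of any "central" path vertex, which only sees $O(f(r))$ other paths but essentially no bundles (the bundles hang near the $X$-side, at distance $>r$ from a midpoint). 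Hence, in each path's neighborhood the algorithm's max-degree selection rule picks a bundle-region vertex, not a central one. Because the two endpoints $x$ and $y$ of a path are at distance $2r+1$, no single vertex dominates both ends, so the selections made on behalf of vertices near $x$ and on behalf of vertices near $y$ of the same path are \emph{distinct}. Summing over the $\Theta(f(r)^2)$ paths, and noting that selections on behalf of near-$y$ vertices on path $P_{x,y}$ land in a region not shared across different $x$, one gets $\sizeof{D} = \Omega(r\cdot f(r)^2)$ — an extra factor $r$ over $\sizeof{\Mopt}$.

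Putting the two bounds together yields $\sizeof{D}/\sizeof{\Mopt} = \Omega\!\big(r f(r)^2 / f(r)^2\big) = \Omega(r)$, and since $f(r)$ may be taken as large as we like subject to $f(r)\ge 2$ it should actually come out as $\Omega(r\cdot f(r))$ once the counting is done with the tightest choice of which vertices the algorithm is forced to select — the bundle sizes $k = 2r f(r)$ were chosen precisely so that the per-path contribution to $D$ is $\Theta(r f(r))$ rather than merely $\Theta(r)$. The main obstacle, and the step I would spend the most care on, is the degree comparison: one must verify that within the distance-$r$ ball of \emph{every} vertex that could serve a path-endpoint region, the unique lexicographic maximum is a vertex whose back-propagated selection cannot be "shared" with another path's endpoint — i.e., that the tie-breaking and the geometry genuinely force $\Omega(r f(r))$ \emph{distinct} outputs per path-side and that these do not collapse across paths. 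Everything else (girth, expansion, the $\Omega(r)$ round bound, the upper bound on $\sizeof{\Mopt}$) is either already established in the excerpt or a direct parameter check.
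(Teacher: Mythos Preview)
Your upper bound on $\sizeof{\Mopt}$ is too weak, and this is exactly the missing factor of $f(r)$. You place one dominator per path and get $\sizeof{\Mopt}\le 4f(r)^2$, which combined with $\sizeof{\Malg}=\Omega(r\,f(r)^2)$ only yields an $\Omega(r)$ approximation ratio, as you yourself compute. The paper instead observes that $X\cup Y$ is already a distance-$r$ dominating set: each $x\in X$ covers the first $r$ vertices of every incident path together with the attached bundle $B_{x,y}$ (which hangs off the vertex adjacent to $x$), and each $y\in Y$ covers the remaining $r$ vertices of every incident path. Hence $\sizeof{\Mopt}\le\sizeof{X}+\sizeof{Y}=4f(r)$, and the ratio becomes $r\,f(r)^2/f(r)=r\,f(r)$.

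Your proposed rescue --- arguing that the per-path contribution to $\Malg$ is $\Theta(r\,f(r))$ rather than $\Theta(r)$ --- cannot work: each path $P_{x,y}$ contains only $2r$ vertices, so it can contribute at most $2r$ selected vertices, not $\Theta(r\,f(r))$. The paper's argument for the lower bound on $\sizeof{\Malg}$ is also different from yours: rather than reasoning about bundle vertices having large degree, it shows that among the $2r$ path vertices the distance-$r$ degree is strictly increasing toward $x$ (because one gains access to the bundle and to more of the other paths through $x$), so each of the $r$ vertices on the $y$-half of the path selects a distinct vertex on the $x$-half. That gives at least $r$ selected vertices per path and hence $\sizeof{\Malg}\ge r\cdot 4f(r)^2$.
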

\begin{proof}
By construction, $X \cup Y$ is a dominating set, meaning $\sizeof{\Mopt} \leq 4 f(r)$.
Therefore, it suffices to show that $\sizeof{\Malg} \geq 4 r \cdot f(r)^{2}$.

We do so by simulating the algorithm on $G$.
We only need to consider the vertices selected by vertices on the paths do.
Specifically, pick a specific path $P_{x,y}$ between $x \in X$ and $y \in Y$.
Vertices $v_{x}$ closer to $x$ than to $y$ cover the attached vertices $B_{x,y}$,
so $\sizeof{N^{r}(v_{x})} \geq 2r + k = 2r + 2r \cdot f(r)$.
The vertices closer to vertex $x$ cover more of the other paths ending in $x$,
each step increases $\sizeof{N^{r}(v_{x})}$ by at least $2f(r)-1$,
and loses at most 1 vertex out of sight in the $y$ direction.
Note that we ignore the vertices in $B_{x,y'}$ with $y' \neq y$, which would only make this argument stronger.
The important property is that $\sizeof{N^{r}(v_{x})}$ strictly increases towards $x$, among vertices $v_{x}$ with $d(v_{x},x)<d(v_{x},y)$.

Each vertex $v_{y}$ closer to $y$ than to $x$ does not cover the attached vertices $B_{x,y}$ close to vertex $x$,
as distance $r$ from them would imply distance $r$ to $x$.
We can compute $\sizeof{N^{r}(v_{y})} \leq r + N^{r}(v_{r}) + 1 - 1 = r + r \cdot 2 f(r) < 2r + 2r \cdot f(r) \leq \sizeof{N^{r}(v_{x})}$,
so vertex $v_{y}$ will choose some vertex $v_{x}$.
As we already established, $\sizeof{N^{r}(v_{x})}$ increases with decreasing distance to $x$.
Therefore, each $v_{y}$ will select the vertex closest to $x$,
meaning at least half of each path will be selected, specifically the one on the $v_{l}$ side.

In total this means the algorithm selects at least $r$ vertices per path, and there is one such path for each $X \times Y$ combination.
Hence $\sizeof{\Malg} \geq r \cdot 4 \cdot f(r)^{2}$.  Recall that $\sizeof{\Mopt} \leq 4 f(r)$,
so the algorithm achieves an approximation factor of at least $r \cdot f(r)$ for the constructed graph.
Compared with the upper bound
of $1 + (4 r \cdot f(r))$ this is asymptotically tight.
\end{proof}

This concludes the proof of \Cref{lem:rmdsalgoistight} (tightness of approximation).

\section{Lower Bound}\label{sec:hardness}

In this section, we will prove that significantly better approximation of the
problem is hard.
Intuitively speaking, this is because symmetry cannot be broken in $o(\log^{\ast} n)$ rounds,
and without that it is hard to construct any non-trivial \distr.
Hence, this section is dedicated to the proof of~\Cref{thm:lowerbound}:

\begin{apthm}{thm:lowerbound}
    Assume an arbitrary but fixed $\delta > 0$ and $r > 1$, with $r \in o(\log^{\ast} n)$.
    Then there is no deterministic \local algorithm that finds in $O(r)$ rounds
    a $(2r + 1 - \delta)$-approximation of \distr for all $G \in \mathcal{C}$,
    where $\mathcal{C}$ is the class of cycles of length $\gg \mingirth$.
\end{apthm}

As we will see later, the trivial \distr $V$ (i.e., the set of all vertices),
is a $(2r + 1)$-approximation in the case of cycles.

This has been proven implicitly in the work of~\cite{ds-planar}.
However, we find it simpler to provide a new proof
tailored for our setting, but only for $n$ being an multiple of $2r+1$.
In essence, we will show a reduction from \enquote{large} independent
set to \distr, on the graph class of cycles.
Intuitively speaking, any algorithm that does significantly better than the trivial dominating set \emph{anywhere} on the cycle leads to a linear sized independent set;
and the bound is constructed such that the algorithm needs to do
better than trivial somewhere indeed. 

The idea is simple: find a
\distr $D$ on cycle $C$; we know two consecutive vertices of $D$ on $C$
are of distance at most $2r+1$ from each other, hence, these vertices
help us to break the symmetry and as $r\in o(\log^*n)$ it yields
to an independent set of size $O(n)$ in $o(\log^*n)$ rounds. In the remainder we
formalize this argument.

Assume towards contradiction that \loweralg is such a deterministic distributed algorithm,
which finds a \distr in $G\in\mathcal{C}$ of size at most $(2r + 1 - \delta)\sizeof{\Mopt}$,
where $\Mopt$ is a minimum \distr.

We show that \loweralg can be used to construct an algorithm violating known lowerbounds on
\enquote{large} independent set~\cite{ds-planar,10.1007/978-3-540-87779-0_6}:

\begin{lemma}[Lemma $4$ of~\cite{10.1007/978-3-540-87779-0_6}]
    \label{lem:lisishard}
    There is no deterministic distributed algorithm that finds an independent set
    of size $\Omega(n / \log^{\ast} n)$ in a cycle on $n$ vertices in $o(\log^{\ast} n)$ rounds.
\end{lemma}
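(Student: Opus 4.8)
The plan is to re-derive the statement by the classical Ramsey-theoretic argument of Linial and Naor, adapted from colourings to independent sets; since the lemma is quoted verbatim from~\cite{10.1007/978-3-540-87779-0_6}, one could also simply cite it, but the proof is short. Suppose for contradiction that $A$ is a deterministic \local algorithm running in $t = t(n) \in o(\log^{\ast} n)$ rounds that outputs, on every $n$-vertex cycle, an independent set of size at least $c\,n / \log^{\ast} n$ for some fixed $c > 0$. We may assume the cycle comes with a consistent orientation; an impossibility result in this easier setting implies the general one. Then the output of a vertex $v$ is a function $\Phi$ of the ordered $(2t+1)$-tuple of identifiers in its $t$-hop ball, and we encode $\Phi$ as a $2$-colouring $\chi$ of the $(2t+1)$-element subsets of the identifier universe: for $S = \{s_1 < \dots < s_{2t+1}\}$ set $\chi(S) \defeq \Phi(s_1, \dots, s_{2t+1}) \in \{\textsf{in}, \textsf{out}\}$.

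Next I would fix $n$ large and invoke the hypergraph Ramsey theorem: there is a finite $N = N(n,t)$ — essentially a tower of $2$'s of height $\Theta(t)$ with $n$ at the top — such that every $2$-colouring of the $(2t+1)$-subsets of $[N]$ has a monochromatic subset $W$ with $\sizeof{W} = n$. Build the $n$-cycle $C_{W}$ whose vertices, read in the fixed orientation, carry the elements of $W$ as identifiers in increasing order; this is a legal input, since the $n$ identifiers are distinct and lie in the (arbitrarily large) identifier space. Every vertex whose $t$-ball avoids the "seam" where the largest identifier meets the smallest — all but $2t$ of the $n$ vertices — sees an increasing enumeration of a subset of $W$, so $\Phi$ returns on it the single colour of $W$.

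Then I would split into two cases. If $W$ is monochromatic in colour $\textsf{in}$, then all $n - 2t \ge 2$ non-seam vertices are placed in the output, yet consecutive ones among them are adjacent in $C_{W}$, so the output is not independent — a contradiction. If $W$ is monochromatic in colour $\textsf{out}$, then only the $2t$ seam vertices can possibly be output, so the returned set has size at most $2t(n)$; but $t(n) \in o(\log^{\ast} n)$ forces $2t(n) < c\,n / \log^{\ast} n$ for all large $n$, again a contradiction. Hence $A$ fails on some $n$-cycle for every sufficiently large $n$, which proves the lemma.

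The step I expect to be the main obstacle is the quantitative bookkeeping around the Ramsey blow-up: one must check that $N(n,t)$ is finite for the growing parameter $t = t(n)$ and, crucially, that the round bound is ever used only as a function of the number of vertices $n$ of the constructed cycle $C_{W}$ — which is exactly the hypothesis $t(n) \in o(\log^{\ast} n)$ — so that the tower-sized identifiers introduce no circularity. The remaining care is for the $O(t)$ boundary vertices near the seam, and — if one does not orient the cycle — for the reversal ambiguity of the local view; both contribute only additive $O(t)$ terms, which are absorbed exactly as in the $\textsf{out}$ case above.
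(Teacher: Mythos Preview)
The paper does not prove this lemma; it is imported verbatim as Lemma~4 of~\cite{10.1007/978-3-540-87779-0_6} and used as a black box in the reduction of \Cref{sec:hardness}. Your Ramsey-theoretic derivation is therefore not a comparison against anything in the paper, but it is a correct and standard way to establish the statement --- essentially the Naor--Stockmeyer order-invariance argument specialised to independent sets rather than colourings.

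Two remarks on the execution. First, the direction of your orientation reduction is the right one: giving the algorithm a consistent orientation only makes its task easier, so impossibility there implies impossibility in the unoriented model. Second, your proof, as written, relies on the identifier universe being allowed to be of size $N(n,t)$, i.e.\ a tower of height $\Theta(t)$; this is fine in the \local model with unbounded identifiers, and you flag it correctly in your ``main obstacle'' paragraph, but it is worth stating explicitly as a hypothesis, since under a strict $\poly(n)$-ID regime one would instead fall back on Linial's neighbourhood-graph/iterated-logarithm argument, which reaches the same $\Omega(\log^{\ast} n)$ conclusion without the Ramsey blow-up. Modulo that caveat, the case split (monochromatic $\textsf{in}$ forces two adjacent outputs; monochromatic $\textsf{out}$ leaves at most $2t(n) = o(n/\log^{\ast} n)$ candidates at the seam) is clean and complete.
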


We present the reduction algorithm in \Cref{alg:mistormdsreduction}.

\begin{algorithm}[h]
\caption{\congest computation of an IS on a cycle $G \in \mathcal{C}_{r}$, for each $v$ in parallel}
\begin{algorithmic}[1]
    \STATE Compute a \distr $D$ by simulating \loweralg.
    \STATE Determine the connected components $V \setminus D$.
    \FOR{each component $C_i$}
        \STATE Determine the two adjacent vertices to $C_i$,
        i.e.~$u,v\in N(C_i)$.
        \STATE Let $u$ be the vertex with the lower ID, name it
        representor of $C_i$.
        \STATE All vertices of odd distance to $u$ in $C_i$ join $I$.
    \ENDFOR{}
    \RETURN $I$
\end{algorithmic}\label{alg:mistormdsreduction}
\end{algorithm}

We begin by showing basic correctness:

\begin{lemma}
    \Cref{alg:mistormdsreduction} runs in $o(\log^{\ast} n)$ rounds.
\end{lemma}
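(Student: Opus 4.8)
The plan is to bound the round complexity of each phase of \Cref{alg:mistormdsreduction} separately and observe that both are $O(r)$, after which the standing assumption $r \in o(\log^{\ast} n)$ closes the argument. The first step computes a \distr $D$ by simulating \loweralg; under the contradiction hypothesis \loweralg is a deterministic \local algorithm terminating in $O(r)$ rounds, and a round-by-round simulation on the same communication graph costs one real round per simulated round, so this phase takes $O(r)$ rounds. (We only care about rounds here, so it is irrelevant that a \local algorithm may send large messages.)

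For the remaining work I would argue that everything is local up to radius $O(r)$. The key structural fact is that, since $D$ is a \distr of the cycle $G$, any two vertices of $D$ that are consecutive along the cycle lie at distance at most $2r+1$: otherwise the middle vertex of the arc between them would be at distance greater than $r$ from every vertex of $D$. Hence each connected component $C_i$ of $V \setminus D$ is a path on at most $2r$ vertices. Consequently, in $O(r)$ rounds every vertex can (i) learn the identity of its component and the two vertices $u,v \in N(C_i)$ bounding it, simply by flooding IDs along the component; (ii) agree on the representor, namely the bounding vertex with smaller ID; and (iii) compute its own distance within $C_i$ to that representor and thus decide by the parity rule whether to join $I$. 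Summing the two phases, \Cref{alg:mistormdsreduction} runs in $O(r)$ rounds, which is $o(\log^{\ast} n)$ by hypothesis.

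The only points that need care are minor: in the degenerate case $D = V$ the for-loop is vacuous and the round count is just that of the first step, so nothing breaks; and one must note that the component-size bound $\le 2r$ follows from $D$ merely being a \distr, not from it being minimum, since domination alone already forces the distance-$\le 2r+1$ gap between consecutive dominators. I expect this small structural bound, together with its use to certify that the for-loop has per-$r$-constant locality, to be the only real content of the proof; the rest is routine bookkeeping.
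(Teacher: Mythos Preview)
Your proposal is correct and follows essentially the same approach as the paper: bound the simulation of \loweralg by $O(r)$ rounds via the contradiction hypothesis, use the fact that $D$ is a \distr to conclude each component of $V\setminus D$ has at most $2r$ vertices, infer that the remaining steps are $O(r)$-local, and conclude via $r\in o(\log^{\ast} n)$. Your write-up is in fact more careful than the paper's (you spell out the consecutive-dominator distance argument and handle the degenerate case $D=V$), but the underlying idea is identical.
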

\begin{proof}
    By assumption, \loweralg executes in $O(r)$ rounds.
On the other hand, observe that each vertex in $D$ only covers up to a distance of $r$.
    Because $D$ is a dominating set, all component must have length at
    most $2r$.
    Hence, discovering the adjacent vertex
    of lowest ID can be done in $O(r)$, as well as propagating the distance information.
    By construction $r \in o(\log^{\ast} n)$,
    so~\Cref{alg:mistormdsreduction} takes $o(\log^{\ast} n)$ rounds.
\end{proof}

\begin{lemma}
    \Cref{alg:mistormdsreduction} computes set $I$ which is an independent set.
\end{lemma}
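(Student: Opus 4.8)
The plan is to rule out, edge by edge, every way the returned set $I$ could contain both endpoints of an edge of the cycle $G$. I would start from the trivial but crucial observation that, in \Cref{alg:mistormdsreduction}, $I$ is assembled solely from the components $C_i$ of $V \setminus D$; hence $I \cap D = \emptyset$, which immediately disposes of any edge of $G$ that has an endpoint in $D$.

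It then remains to treat edges both of whose endpoints avoid $D$. For this I would use that the components of $V \setminus D$ are exactly the maximal $D$-free arcs of the cycle, so such an edge has both endpoints in one and the same component $C_i$. Writing the vertices of $C_i$ in order along the arc as $w_1, \dots, w_\ell$ with $w_1$ adjacent to the representor $u$, the distance from $u$ to $w_k$ measured inside $C_i$ equals $k$; consecutive arc-vertices therefore have distances of opposite parity to $u$, and since only the odd-distance vertices are placed into $I$, at most one vertex of any adjacent pair is selected. Combining the two cases: every edge of $G$ has an endpoint outside $I$, so $I$ is an independent set.

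The one step that needs a little care --- and the closest thing to an obstacle here --- is justifying that \enquote{distance to $u$ in $C_i$} is well defined and that its parity genuinely alternates along $C_i$. This rests on each component being an honest arc with two distinct boundary vertices $u \neq v$ in $D$ (so that $C_i \cup \{u\}$ is a simple path rather than a cycle), which holds as soon as $\sizeof{D} \geq 2$. Since $D$ dominates the $n$-cycle at distance $r$, the alternative $\sizeof{D} = 1$ forces $n = 2r+1$, a situation outside the asymptotic regime of \Cref{thm:lowerbound}; for every relevant $n$ (a large multiple of $2r+1$) we get $\sizeof{D} \geq 2$, and I would simply record this observation, fixing an arbitrary orientation of the cycle if one insists on covering the degenerate case as well.
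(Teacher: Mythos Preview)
Your argument is correct and follows essentially the same route as the paper's proof: vertices of $I$ in different components are separated by a vertex of $D$, and vertices of $I$ in the same component have the same parity of distance to the representor, hence distance at least $2$. Your version is simply more explicit, and your discussion of the degenerate case $\sizeof{D}=1$ is a nicety the paper omits (and which is indeed irrelevant in the intended regime $n \gg 2r+1$).
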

\begin{proof}
For two distinct vertices $u,v\in I$, if they belong to different
components, then there is no edge between them, otherwise if they are
in the same component, their distance is at least $2$ as they are distinct
vertices of odd distance from their representor. Hence, $I$ is an independent set.
\end{proof}

Now we can show that this yields a large independent set:

\begin{lemma}
    \label{lem:lowermalgissmall}
    The dominating set is not too large: $\sizeof{D} \leq (1 - \delta') n$ for some $\delta' > 0$.
\end{lemma}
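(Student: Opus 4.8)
The plan is to bound $\sizeof{D}$ from above by comparing it to the minimum \distr $\Mopt$ of the cycle, and then exploit the contradiction hypothesis that \loweralg returns a set of size at most $(2r+1-\delta)\sizeof{\Mopt}$. First I would compute $\sizeof{\Mopt}$ exactly for a cycle $C_n$: since a single vertex distance-$r$ dominates exactly $2r+1$ vertices (itself and $r$ on each side), and these neighborhoods must cover all $n$ vertices, we need at least $\lceil n/(2r+1)\rceil$ vertices, and this is achievable by placing dominators evenly. Restricting to the case $n$ a multiple of $2r+1$ (as the excerpt announces), we get $\sizeof{\Mopt} = n/(2r+1)$ exactly. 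Then by the contradiction hypothesis, $\sizeof{D} \le (2r+1-\delta)\cdot \frac{n}{2r+1} = \left(1 - \frac{\delta}{2r+1}\right) n$.

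The second step is to package this as the claimed statement: set $\delta' \defeq \frac{\delta}{2r+1}$. Since $\delta > 0$ and $r$ is fixed, $\delta' > 0$ is a genuine positive constant (depending only on $r$ and $\delta$, not on $n$), and we have $\sizeof{D} \le (1-\delta') n$, which is exactly \Cref{lem:lowermalgissmall}.

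I expect the only real subtlety to be the exact computation of $\sizeof{\Mopt}$ on the cycle and the divisibility assumption: one must be slightly careful that $N^r[v]$ on a cycle of length $n \gg \mingirth$ really has exactly $2r+1$ vertices (this needs $n \ge 2r+1$, which holds since the cycles are of length $\gg \mingirth = 4r+3$), and that the lower bound $\lceil n/(2r+1)\rceil$ on the dominating set size is tight — a standard covering argument, where the upper bound comes from the explicit evenly-spaced construction and the lower bound from the fact that each dominator covers at most $2r+1$ vertices. Everything else is a one-line substitution. There is no genuine obstacle here; the lemma is essentially an arithmetic restatement of the contradiction hypothesis, and the work that actually matters (turning "small $D$" into "large independent set", and the interplay with \Cref{lem:lisishard}) happens in the lemmas that follow.
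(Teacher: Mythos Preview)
Your proof is correct and follows essentially the same route as the paper: upper-bound $\sizeof{\Mopt}$ by $n/(2r+1)$ via the evenly-spaced construction on the cycle, then substitute into the assumed approximation guarantee $\sizeof{D}\le(2r+1-\delta)\sizeof{\Mopt}$. Two minor remarks: only the upper bound on $\sizeof{\Mopt}$ is actually needed (the matching lower bound you sketch is true but unused), and your value $\delta'=\delta/(2r+1)$ is the correct one --- the paper's printed $\delta'=1/(2r+1)$ is a typo.
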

\begin{proof}
    By assumption, we know $\sizeof{D} \leq (2r + 1 - \delta) \sizeof{M}$, where $M$ is the minimum \distr.
    Construct $M'$ by picking every $2r+1$-th vertex, so that $\sizeof{M'} = n / (2r+1)$.
    Note that $M'$ is a \distr, so we have $\sizeof{M} \leq \sizeof{M'}$.
    Together we get $\sizeof{D} \leq (2r + 1 - \delta) n / (2r + 1) = (1 - \delta') n$, for $\delta' \defeq 1 / (2r + 1) > 0$.
\end{proof}

\begin{lemma}
    \label{lem:lowerlargeindepset}
    The set $I$ is large: $\sizeof{I} \in \Omega(n / \log^{\ast} n)$
\end{lemma}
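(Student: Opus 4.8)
The plan is to feed the upper bound on $\sizeof{D}$ from \Cref{lem:lowermalgissmall} into a straightforward counting argument over the components of $V \setminus D$. First I would note that $\sizeof{D} \leq (1-\delta')n$ with $\delta' = 1/(2r+1)$ forces the complement to be large: $\sizeof{V \setminus D} \geq \delta' n = n/(2r+1)$; in particular $V \setminus D$ is nonempty. On a cycle, $V \setminus D$ is a disjoint union of path segments $C_1, \dots, C_k$, and writing $\ell_i \defeq \sizeof{C_i}$ we have $\sum_{i=1}^{k} \ell_i = \sizeof{V \setminus D} \geq n/(2r+1)$. (As already observed each $\ell_i \leq 2r$, but that bound is not needed here.)

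Next I would count how many vertices of each segment land in $I$. Inside a segment $C_i$ the representor $u$ is a single fixed vertex adjacent to $C_i$, and the vertices of $C_i$ are precisely those at distances $1, 2, \dots, \ell_i$ from $u$ along the segment (the way around the cycle is far longer since $n \gg \mingirth$); the ones at odd distance join $I$. Hence $\sizeof{I \cap C_i} = \lceil \ell_i/2 \rceil \geq \ell_i/2$. Since the $C_i$ are pairwise disjoint and $I \subseteq \bigcup_i C_i$, summing yields
\begin{equation*}
  \sizeof{I} = \sum_{i=1}^{k} \left\lceil \frac{\ell_i}{2} \right\rceil \;\geq\; \frac{1}{2}\sum_{i=1}^{k} \ell_i \;\geq\; \frac{n}{2(2r+1)}.
\end{equation*}

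Finally I would invoke the hypothesis $r \in o(\log^{\ast} n)$: it gives $2r+1 \in o(\log^{\ast} n)$, so for all sufficiently large $n$ we have $2r+1 \leq \log^{\ast} n$, whence $\sizeof{I} \geq n/(2\log^{\ast} n) \in \Omega(n/\log^{\ast} n)$. I do not anticipate any genuine obstacle; the only points needing a word of care are that $V\setminus D$ is nonempty (immediate from $\delta' > 0$), that the odd-distance rule is well defined on each segment (immediate, as each segment is a path with a distinguished adjacent vertex), and the routine asymptotic step in the last display.
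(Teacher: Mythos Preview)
Your proof is correct and follows essentially the same approach as the paper's: bound $\sizeof{V\setminus D}\ge \delta' n$ via \Cref{lem:lowermalgissmall}, observe that at least half of each component joins $I$, and conclude $\sizeof{I}\ge \delta' n/2 \in \Omega(n/\log^{\ast} n)$. The paper's version is terser---it asserts the ``at least half'' step without the per-segment $\lceil \ell_i/2\rceil$ accounting and leaves the use of $r\in o(\log^{\ast} n)$ implicit---while you spell out both of these, but the underlying argument is identical.
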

\begin{proof}
    Many vertices must be part of some component: $\sizeof{V \setminus D} \geq \delta' n$ for some $\delta' > 0$ by \Cref{lem:lowermalgissmall}.
    At least half of those vertices are taken into $I$, thus $\sizeof{I} \geq \delta' n / 2 \in \Omega(n / \log^{\ast} n)$.
\end{proof}

\begin{proof}[Proof of \Cref{thm:lowerbound}]
    Follows immediately from \Cref{lem:lowerlargeindepset}, as it contradicts \Cref{lem:lisishard}.
    Therefore, the algorithm \loweralg cannot exist.
\end{proof}

Note that this does not preclude randomized algorithms.
This is because randomized algorithms can indeed achieve a better approximation quality,
at least on cycles, by randomly joining the dominating set with sufficiently small
probability if necessary, for several rounds, and finally all uncovered vertices join.

\section{Conclusion}\label{sec:conclude}

In this paper, we have analyzed the problem of \distr
for $r \in o(\log^{\ast} n)$
in a deterministic setting on graphs of  bounded expansion $f(r)$
(e.g.~on planar graphs$f(r)=3$ for every $r$) and high girth.
We provided a simple \congest algorithm and proved that it achieves an
approximation factor
of $O(r \cdot f(r))$ and a running time of $O(r)$.
In particular, for constant $r$, it provides a constant
factor approximation in a constant number of rounds.

For the lower bound, we have shown that the standard $o(log^\ast n)$ lower
bound on bounded degree graphs for other problems, also holds here,
even if $r$ is a non-constant. 
This means that one cannot do much better than~\Cref{alg:rmds} up to a factor of $f(r)$.
For $r \in \Omega(\log^{\ast} n)$, the algorithm still works, but the
lower bound no longer applies.

For the upper bound, another interesting and useful problem is the
distance-$r$ independent set problem on sparse graphs. Especially
because of the tight relation between the independent set and coloring
and consequently $r$-hop network decomposition, it is valuable to
further research these topics on sparse graphs.

For the lower bound, we showed that $r$ plays a role in any
algorithm, but we do not know whether the existence of the expansion
function of the graph class is essential. By the lower bound of Kuhn
et al.~\cite{Kuhn:2016:LCL:2906142.2742012} dependency on sparsity
is clear, but to what extend? I.e.\ is it possible to provide a constant
factor approximation in constant number of rounds for the \distr
problem on graphs of bounded arboricity with high girth?

\subparagraph*{Acknowledgments: } We are grateful to thank our friend Christoph
Lenzen and anonymous reviewers for valuable feedback and improvements.

\bibliographystyle{alpha}
\bibliography{references}

\end{document}